\newcommand{\E}{\mathbb{E}}
\newcommand{\Var}{\text{Var}}
\newcommand{\Prob}{\mathbb{P}}
\newcommand{\Sum}{\sum_{i = 1}^n}
\newcommand{\bbm}{\begin{bmatrix}}
\newcommand{\ebm}{\end{bmatrix}}
\newcommand{\pder}[2]{\dfrac{\partial#1}{\partial#2}}
\newcommand{\Es}[1]{\mathbb{E}_{\bm{\theta}^{(t)}}\left[#1\middle|\bm{u}_{Oi}\right]}
\newtheorem{lemma}{Lemma}
\title{A Hybrid EM Algorithm for Linear Two-Way Interactions with Missing Data}
\date{} 					
\author{
  Dale S.~Kim\\
  Department of Statistics\\
  University of California, Los Angeles\\
  Los Angeles, CA 90095 \\
  \texttt{dalekim25@ucla.edu} \\
}
\begin{document}
\maketitle

\begin{abstract}
We study an EM algorithm for estimating product-term regression models with missing data.
The study of such problems in the likelihood tradition has thus far been restricted to an EM algorithm method using full numerical integration.
However, under most missing data patterns, we show that this problem can be solved analytically, and numerical approximations are only needed under specific conditions.
Thus we propose a hybrid EM algorithm, which uses analytic solutions when available and approximate solutions only when needed.
The theoretical framework of our algorithm is described herein, along with two numerical experiments using both simulated and real data.
We show that our algorithm confers higher accuracy to the estimation process, relative to the existing full numerical integration method.
We conclude with a discussion of applications, extensions, and topics of further research.
\end{abstract}

\keywords{EM algorithm \and numerical integration \and interactions \and missing data}

\section{Introduction}

We consider the problem of missing data in regression models with product-term predictors.
In the psychological sciences, product-term regression models are widely used to test hypotheses pertaining to interactions \cite{Aiken1991},  moderation \cite{Baron1986}, and/or conditional processes \cite{Hayes2018}.
For example, these hypotheses may refer to the difference of an effect between two groups, the dependence of an outcome-predictor relationship on other variables, or the effect of two simultaneous symptoms above and beyond their constituent effects.
A considerable amount of methodological research has been dedicated to interpreting these models \cite{Dawson2014, McCabe2018, Preacher2006}, attesting to their importance and popularity.

However, the estimation of product-term regression models is complicated by the issue of missing data, which is particularly prevalent for data involving human subjects.
It can arise from subject dropout, item non-response, logistical errors, or even serve as designed aspect of data collection \cite{Graham2009, Raghunathan2004}.
If the mechanism of missing data meets certain conditions, this problem (or design) can be accommodated and consistent estimates can be obtained.
In particular, we consider the situation under which the missing data mechanism is called \textit{ignorable} \cite{Schafer1997}.
Colloquially speaking, it means that the probability of an observation being missing does not depend on its own would-be realized value, and that the data value generating mechanism is distinct from the missingness generating mechanism.

Previous literature on this problem can be broadly cast into two categories: correctly specified and misspecified characterizations of the joint distribution of the data.
For product term-regression models, incorrect specification generally occurs for the product terms.
The most common type of misspecification is naively assuming the product terms are jointly Gaussian along with their constituent factors \cite{VonHippel2009}.
This has also been called the ``just another variable'' approach \cite{Seaman2012} as it treats the product term simply as another Gaussian random variable.
However, this introduces a contradiction of distributional assumptions, as a product of Gaussians random variables cannot itself be Gaussian.
While some studies have shown that there may be some conditions under which this method is reasonable \cite{Enders2014, Seaman2012}, it is not guaranteed to provide unbiased estimates in general \cite{Bartlett2015, Ludtke2019, Zhang2017}.

To address these issues, correctly specified methods have been developed.
This is typically accomplished by a factorizing the joint distribution into a product with conditional distributions.
In this way, it can be easier to correctly specify the constituent factored distributions, rather than the original joint distribution itself (e.g., \citeNP{Ibrahim1990}).
Hence, this technique can ensure compatibility between the substantive model of interest, and the overall joint distribution of the data \cite{Liu2014}.
It has also been called factored regression modeling \cite{Ludtke2019}, substantive model compatibility \cite{Bartlett2015}, and model-based handling \cite{Enders2020}.

The application of this technique however, has been largely focused on multiple imputation methods with Markov Chain Monte Carlo (MCMC) under a Bayesian paradigm \cite{Kim2015, Ludtke2020, Zhang2017}.
While MCMC methods are widely applicable, convergence assessment is notoriously difficult \cite{Brooks1998, Cowles1996, Gelman1996}, and it is known for its slow and inefficient uses of Monte Carlo samples \cite{Gelman1992, Mossel2006}.
Further, improper applications of MCMC can result in unreliable and/or biased inferences \cite{Cowles1999, Flegal2008}.
As such, maintaining the integrity of the procedure requires careful scrutiny and design.

\begin{figure}[t]
\centering
\includegraphics[width=\textwidth, keepaspectratio]{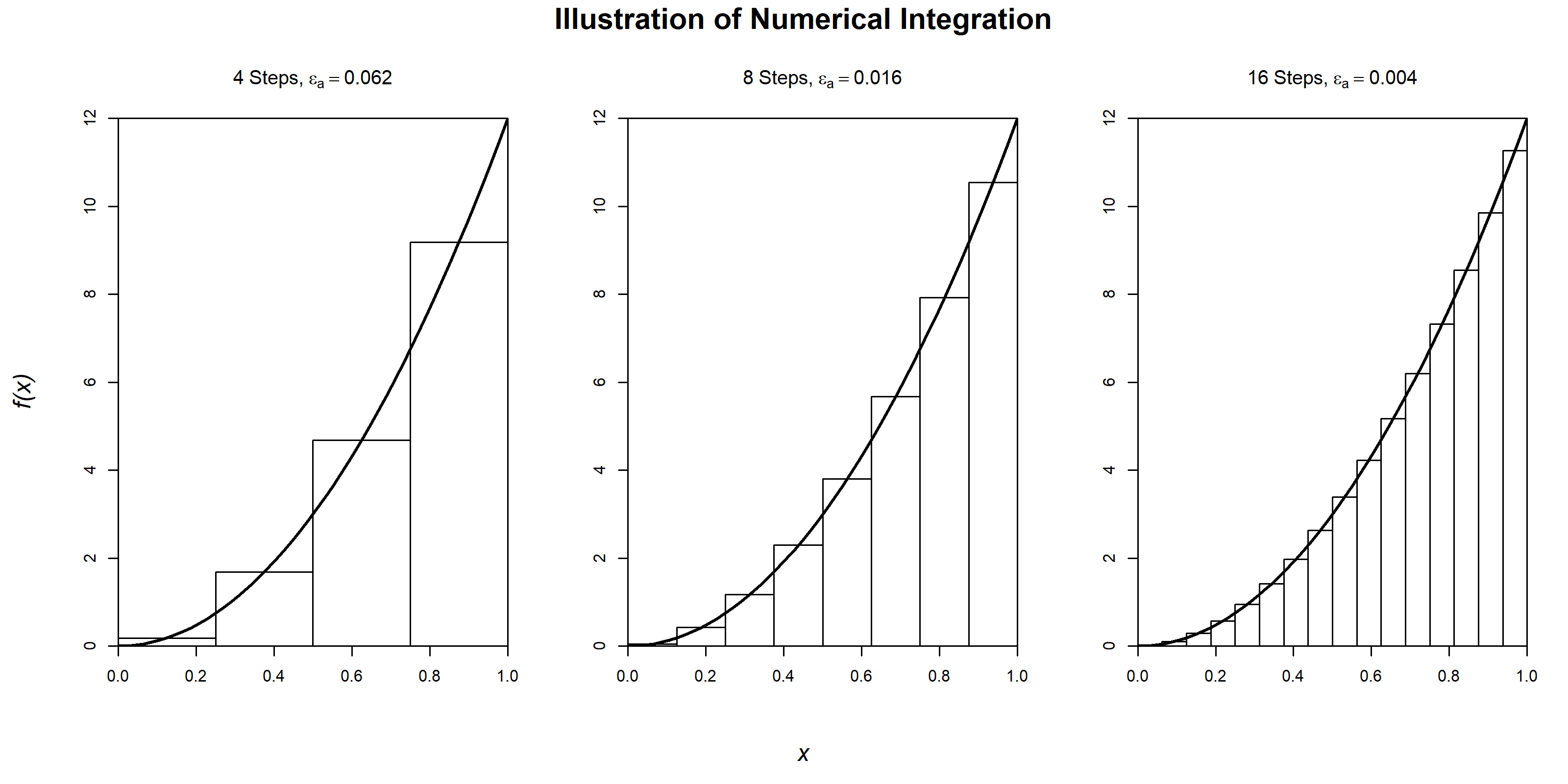} \\
\caption{Illustration of Riemann midpoint numerical integration. The error of approximation is denoted as $\varepsilon_a$.}
\label{fig:ni_ex}
\end{figure}

To avoid such issues, we focus on correctly specified likelihood techniques, for which research has been scant.
Currently, only an EM algorithm using full numerical integration has been proposed by \citeA{Ludtke2019}.
While their method is flexible and handles a variety of non-linear models, numerical integration is known to suffer in accuracy and computational complexity as the number of dimensions increase \cite{Hinrichs2014, Simonovits2003}.
Hence, the feasibility of this method is in question even when the number of variables is moderate.
Indeed to date, this method has only been tested under very optimistic conditions.
Specifically, only the simplest version of the product-term model (two predictors with one product) has been simulated, with only a single variable exhibiting missingness.

To illustrate this problem, consider the numerical integration example displayed in Figure \ref{fig:ni_ex}.
We numerically integrate an example function: $f(x) = 12 x^2$ on the interval $[0, 1]$ using the Riemann midpoint method.
Supposing it is prohibitive to integrate $f(x)$ analytically, we may instead integrate an approximation to $f(x)$ whose integration is much easier.
The Riemann midpoint method approximates $f(x)$ with a step function, whose integral is easily taken by summing the rectangles underneath each step.
As illustrated by this one-dimensional example, the error of approximation ($\varepsilon_a$) reduces rather quickly as the number of interpolating steps increase.

A different story arises when we study the multi-dimensional case.
Suppose we consider the multi-dimensional generalization of $f(x)$, by independently adding dimensions (i.e., $f(x_1, \dots, x_p) = \sum_{j = 1}^p 12x_j^2$).
The number of steps required to achieve the same amount of accuracy increases exponentially.
In Figure \ref{f_niError} (left) we show a contour plot of the error of approximation plotted against the number of steps.
Each contour shows that exponentially more steps are required in order to achieve any given level of accuracy.
For example, suppose we wanted to achieve a numerical approximation that is within 0.01 of the true value.
We can see in Figure \ref{f_niError} (right) that as low as the 4-dimensional case, we require about two hundred thousand steps, and the 5-dimensional case requires almost 6.5 million.

\begin{figure}[t]
\centering
\includegraphics[width=\textwidth, keepaspectratio]{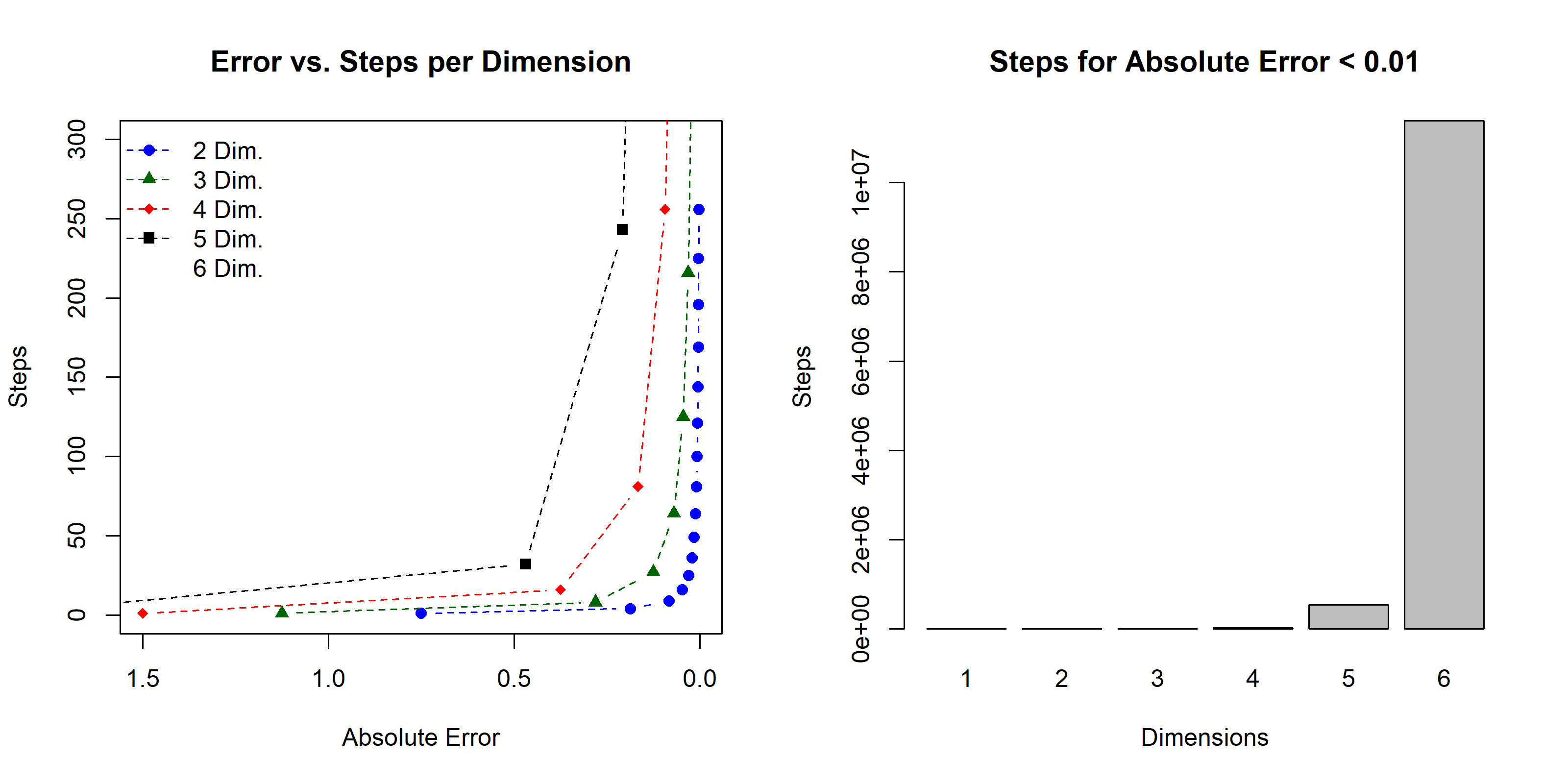} \\
\caption{Demonstration of numerical integration computational complexity and approximation error.}
\label{f_niError}
\end{figure}

Needless to say, numerical integration should not be undertaken unless absolutely necessary, and this is the premise of our proposed research.
We propose a hybrid EM algorithm that obviates much of the required calculations done by numerical integration.
Specifically, we will show that numerical integration is not necessary for most missing data patterns, and demonstrate how to use analytic solutions in their place.
These exact solutions will yield faster and more accurate estimates relative to their approximate counterparts.
Therefore, this research has two main goals: (1) to develop the theoretical motivation of the hybrid EM algorithm and demonstrate its feasibility, and (2) to empirically study the accuracy improvements conferred by the hybrid EM algorithm in practical data scenarios.

\subsection{Model and Notation}

Let $\bm{X} \sim \mathcal{N}_p(\bm{\mu}, \bm{\Sigma})$ denote a $p \times 1$ random vector of predictor variables.
Then formulate a linear product-term model for a random scalar outcome variable $Y$ as follows:
\begin{equation} \label{regressionModel}
Y = \bm{d}(\bm{X})^T \bm{\beta} + \epsilon,
\end{equation}
where $\epsilon \sim \mathcal{N}(0, \sigma^2_{\epsilon})$ is a scalar random variable of error terms, $\bm{\beta}$ is a $d \times 1$ vector of regression coefficients, and $\bm{d}(\bm{X})$ is an $d \times 1$ vector-valued design function as follows:
\begin{equation}
\bm{d}(\bm{X}) = \begin{bmatrix} 1 & \bm{X}^T & \overrightarrow{X_j X_k}^T \end{bmatrix}^T,
\end{equation}
where $\overrightarrow{f(\bm{X})}$ denotes the vector of all unique permutations of $f(\bm{X})$ over the specified indices.
In this case, $\overrightarrow{X_j X_k}$ is a vector whose elements are comprised of all unique permutations of $X_j X_k$ for all $j, k \in \{1, \dots, p\}$
Hence, $\bm{d}(\bm{X})$ is a vector that augments $\bm{X}$ with a regression intercept and the product terms.

We note that there are two implicit assumptions with this model for the purposes of generality.
First, we assume that the vector $\bm{X}$ contains the substantive model predictors as well as any desired auxiliary variables.
Second, it is assumed the substantive model contains all possible two-way products among the variables in $\bm{X}$.
To accommodate the fact that some auxiliary variables or product terms may not be desired in the substantive model, their $\beta$ coefficient need only be constrained to zero.

\subsection{Missing Data Assumptions}

To recast the data from a predictor-outcome distinction to a missing-observed distinction, we use the following notation.
Denote the an augmented data vector as $\bm{U} = \begin{bmatrix} Y & \bm{X}^T\end{bmatrix}^T$, which can be reordered as $\begin{bmatrix} \bm{U}_O^T & \bm{U}_M^T \end{bmatrix}^T$, where $O$ is the index set of observed variables and $M$ is the index set of missing variables.
Further, $\Prob(\bm{U})$ is parameterized generally by a vector $\bm{\theta}$, for which we write $\Prob_{\bm{\theta}}(\bm{U})$.
Then, let $\bm{R} \in \{0 ,1\}^{p + 1}$ be a binary random vector, which indicates whether the elements of $\bm{U}$ are observed, and has a probability distribution parameterized by the vector $\bm{\zeta}$.

We assume that the elements of $\bm{\theta}$ and $\bm{\zeta}$ are \textit{distinct}, or that the joint space of $\bm{\theta}$ and $\bm{\zeta}$ is simply their Cartesian product $\bm{\theta} \times \bm{\zeta}$.
Further, we assume the data are \textit{missing at random} (MAR; \citeNP{Rubin1976}):
\begin{equation}
\Prob_{\bm{\zeta}}(\bm{R} | \bm{U}) = \Prob_{\bm{\zeta}}(\bm{R} | \bm{U}_O).
\end{equation}
Taking MAR in tandem with the distinctness of $\bm{\theta}$ and $\bm{\zeta}$, we say that the missing data mechanism is \textit{ignorable} \cite{Schafer1997}.

\subsection{The EM Algorithm for Missing Data} \label{sec:em_alg}

The EM algorithm is a two-step iterative procedure for obtaining parameter estimates for models with missing data \cite{Dempster1977}.
The steps are as follows:

\noindent \textbf{\textit{E}-Step.} For any iteration $t$, define a $Q$-function given an intial parameter start value $\bm{\theta}^{(0)}$:
\begin{equation}
\begin{aligned}
Q_{\bm{\theta}^{(t)}}(\bm{\theta}) &= \E_{\bm{\theta}^{(t)}} \left[ \log \Prob_{\bm{\theta}}(\bm{U}) | \bm{U}_O \right] \\
&= \int_{\bm{u}_M} \log \Prob_{\bm{\theta}}(\bm{U}) \mathbb{P}_{\bm{\theta}^{(t)}}(\bm{u}^{}_M | \bm{u}^{}_O) \, d\bm{u}^{}_M.
\end{aligned}
\end{equation}

\noindent \textbf{\textit{M}-Step. } Maximize the $Q$-function with respect to $\bm{\theta}$ and set the result as $\bm{\theta}^{(t+1)}$:
\begin{equation}
\bm{\theta}^{(t + 1)} = \underset{\bm{\theta}}{\text{argmax}} \, Q_{\bm{\theta}^{(t)}}(\bm{\theta}),
\end{equation}
where integrating with respect to a vector is shorthand for multiple integration with respect to all elements of the vector (i.e., $\int_{\bm{z}} f(\bm{z}) \, d\bm{z} = \int_{z_1} \cdots \int_{z_p} f(\bm{z}) \, dz_p \cdots dz_1$, for $\bm{z} \in \mathbb{R}^p$).
Hence, this is an iterative procedure that maximizes the expectation of the complete data log-likelihood, given the observed data.
It is known to converge to a local maximum of the likelihood function under very general conditions \cite{Wu1983}.

\subsection{Application to Product-Term Regression Models} \label{sec:em_prod}

For practical uses, the main task of applying the EM algorithm is setting up the $Q$-function.
We do so for product-term regression models by characterizing the joint model of the data as follows:
\begin{equation}
\mathbb{P}(\bm{U}) = \mathbb{P}(Y | \bm{X})\mathbb{P}(\bm{X}),
\end{equation}
where,
\begin{equation}
\begin{aligned}
\mathbb{P}(\bm{X}) &\sim \mathcal{N}_p(\bm{\mu}, \bm{\Sigma}) \\
\mathbb{P}(Y | \bm{X}) &\sim \mathcal{N}(\bm{d}(\bm{x})^T \bm{\beta}, \sigma^2_{\epsilon}).
\end{aligned}
\end{equation}Since $\mathbb{P}(\bm{U})$ factorizes into two Gaussian distributions, it can be written in exponential family form:
\begin{equation}
\mathbb{P}(\bm{U}) = \exp\left[ \bm{\eta}(\bm{\theta})^T \bm{T}(\bm{U}) - A(\bm{\theta}) \right],
\end{equation}
which yields a $Q$-function of:
\begin{equation}
\begin{aligned}
Q_{\bm{\theta}^{(t)}}(\bm{\theta}) &= \mathbb{E}_{\bm{\theta}^{(t)}} \left[ \log \mathbb{P}_{\bm{\theta}}(\bm{U}) | \bm{U}_O \right] \\
&= \mathbb{E}_{\bm{\theta}^{(t)}} \left[ \bm{\eta}(\bm{\theta})^T \bm{T}(\bm{U}) - A(\bm{\theta}) | \bm{U}_O \right] \\
&= \bm{\eta}(\bm{\theta})^T \mathbb{E}_{\bm{\theta}^{(t)}} \left[ \bm{T}(\bm{U}) | \bm{U}_O \right] - A(\bm{\theta}).
\end{aligned}
\end{equation}
where $\bm{\theta}$ is the vector which contains the unique elements of $\{\bm{\beta}, \sigma^2, \bm{\mu}, \bm{\Sigma}\}$, $\bm{\eta}(\bm{\theta})$ is the vector of canonical parameters, and $A(\bm{\theta})$ is the log-partition function.
Hence, constructing the $Q$-function amounts to deriving $\mathbb{E} \left[ \bm{T}(\bm{U}) | \bm{U}_O \right]$ per missing data pattern.
It can be shown that $\bm{T}(\bm{U})$ is:
\begin{equation} \label{SS}
\bm{T}(\bm{U}) = \left[ \begin{smallmatrix}
Y & Y \overrightarrow{X^{}_j}^T & Y^2 & Y \overrightarrow{X^{}_j X^{}_k}^T & \overrightarrow{X^{}_j}^T & \overrightarrow{X^{}_j X^{}_k}^T & \overrightarrow{X^2_j}^T & \overrightarrow{X^{}_j X^{}_k X^{}_l}^T & \overrightarrow{X^2_j X^{}_k}^T & \overrightarrow{X^{}_j X^{}_k X^{}_l X^{}_m}^T & \overrightarrow{X^2_j X^{}_k X^{}_l}^T & \overrightarrow{X^2_j X^2_k}^T
\end{smallmatrix} \right]^T,
\end{equation}
recalling that $\overrightarrow{f(\bm{X})}$ denotes the vector of all unique permutations of $f(\bm{X})$ over the specified indices.
For example, $\overrightarrow{X_j X_k X_l X_m}$ is the vector comprised of all unique permutations of $X_j X_k X_l X_m$, for all $j, k, l, m \in \{1, \dots p\}$.
The derivation of $\bm{T}(\bm{U})$ has been relegated to Appendix \ref{sec_ss}.
Note that these sufficient statistics only apply to product-term regression models.
Different polynomial designs will imply a different $\bm{T}(\bm{U})$ vector.
We also derive the maximizers of the $Q$-function in Appendix~\ref{sec:q_max}.

\section{Missing Data Patterns} \label{sec:mdp}

The theoretical motivation of this research is the derivation of analytic $Q$-functions under as many missing data patterns as possible.
The form of $\bm{T}(\bm{U})$ may appear complex and the possible missing data patterns for $\mathbb{E}\left[ \bm{T}(\bm{U}) | \bm{U}_O \right]$ are combinatorially large.
However, using an appropriate taxonomy, solutions for general classes of missing data patterns can be obtained and applied easily.
The only types of missing data patterns (MDP) that need to be considered are as follows:
\begin{itemize}
 \item \textbf{MDP 1:} $Y$ is missing and $\bm{X}$ has any missingness pattern.
 \item \textbf{MDP 2:} $Y$ is observed and $\bm{X}$ is patterned such that no product terms are fully missing.
 \item \textbf{MDP 3:} $Y$ is observed and $\bm{X}$ is patterned such that one or more product terms are fully missing.
\end{itemize}
We will provide the methods of calculating $\mathbb{E}\left[ \bm{T}(\bm{U}) | \bm{U}_O \right]$ under each of these patterns.
Specifically, we will show that analytic solutions exist for MDP 1 and MDP 2, and computational methods are only necessary for MDP 3.

\subsection{$Y$ is Missing, \boldmath{$X$} has Any Missing Data Pattern} \label{sec:mdp1}

MDP 1 is concerned with the case when $Y$ is missing, and $\bm{X}$ can take on any missingness pattern.
We will show that all elements of $\mathbb{E}\left[ \bm{T}(\bm{U}) | \bm{U}_O \right]$ under this pattern can be calculated by known functions of $\bm{\theta}$.
Hence, the $Q$-function for this MDP can always be constructed analytically.

First, we consider the sufficient statistics that are solely a function of $\bm{X}$ and do not have a $Y$ term.
In Equation \ref{SS} these are the latter 8 (of 12) entries of $\bm{T}(\bm{U})$.
Note that these entries are all products of the elements of $\bm{X}$ (e.g., $X_j X_k X_l X_m$ or $X^2_j X^2_k$).
Further, $Y$ is missing in this MDP, so we have $\bm{U}_O = \bm{X}_O$.
Thus, under MDP 1, we can more generally express the elements of $\mathbb{E}\left[ \bm{T}(\bm{U}) | \bm{U}_O \right]$ that only depend on $\bm{X}$ as:
\begin{equation}
\label{e_condProd}
\mathbb{E} \left[ \prod_{i \in M} X^{a_i}_i \middle| \bm{X}_O \right],
\end{equation}
where $a_i$ are non-negative integers.

Our strategy will make use of two key facts.
First, Gaussian random vectors are closed under conditioning, hence $\bm{X}_M|\bm{X}_O$ is itself a Gaussian random vector whose parameters are functions of $\bm{\mu}$ and $\bm{\Sigma}$.
Second, arbitrary product moments of random vectors can generally be found by appropriately differentiating their moment-generating function \cite{Keener2010}.
Using the Gaussian moment-generating function in this way will remain a key tool for the rest of the theoretical development of this algorithm, so we will state the procedure in the following lemma.

\begin{lemma}[Gaussian Product Moments]
\label{lem:gaussian_mgf}
Let $\bm{X}$ be a Gaussian random vector distributed as $\bm{X} \sim \mathcal{N}_p(\bm{\mu}, \bm{\Sigma})$.
Then any product moment of the form $\mathbb{E} [ \prod_{i = 1}^p X^{a_i}_i ]$ can be expressed as a function of $\bm{\mu}$ and $\bm{\Sigma}$.
\end{lemma}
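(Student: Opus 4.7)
The plan is to invoke the fact that any product moment of a random vector can be extracted by differentiating its moment generating function, specialized to the closed-form MGF of a multivariate Gaussian. Specifically, I would start by recording that for $\bm{X} \sim \mathcal{N}_p(\bm{\mu}, \bm{\Sigma})$, the MGF
\begin{equation*}
M_{\bm{X}}(\bm{t}) = \mathbb{E}[\exp(\bm{t}^T \bm{X})] = \exp\left(\bm{t}^T \bm{\mu} + \tfrac{1}{2} \bm{t}^T \bm{\Sigma} \bm{t}\right)
\end{equation*}
is finite for every $\bm{t} \in \mathbb{R}^p$, so in particular it is finite in an open neighborhood of the origin. This is the only analytic ingredient needed.

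Next I would invoke the standard result (cited in the paper as \citeA{Keener2010}) that when an MGF is finite in a neighborhood of $\bm{0}$, the random vector has moments of all orders, the MGF is real-analytic there, and differentiation can be passed inside the expectation. From this it follows that for any vector of non-negative integers $\bm{a} = (a_1, \dots, a_p)$,
\begin{equation*}
\mathbb{E}\left[ \prod_{i=1}^p X_i^{a_i} \right] = \left. \frac{\partial^{a_1 + \cdots + a_p}}{\partial t_1^{a_1} \cdots \partial t_p^{a_p}} M_{\bm{X}}(\bm{t}) \right|_{\bm{t} = \bm{0}}.
\end{equation*}

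Finally, I would observe that the exponent $\bm{t}^T \bm{\mu} + \tfrac{1}{2} \bm{t}^T \bm{\Sigma} \bm{t}$ is a polynomial in $\bm{t}$ whose coefficients are elements (or sums of products of elements) of $\bm{\mu}$ and $\bm{\Sigma}$. Repeatedly applying the chain rule to $\exp(\cdot)$ produces derivatives whose general form is $M_{\bm{X}}(\bm{t})$ multiplied by a polynomial in $\bm{t}$, $\bm{\mu}$, and $\bm{\Sigma}$. Evaluating at $\bm{t} = \bm{0}$ gives $M_{\bm{X}}(\bm{0}) = 1$ times a polynomial in the entries of $\bm{\mu}$ and $\bm{\Sigma}$, which establishes the claim.

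The main obstacle is conceptual rather than computational: one must be sure that termwise differentiation under the expectation is justified so that the symbolic derivatives actually equal the moments. Since the Gaussian MGF is finite everywhere on $\mathbb{R}^p$, analyticity and dominated convergence both apply, so this justification is immediate and no explicit recursion (e.g., Isserlis' theorem) is required in the statement, though it could be invoked if an explicit combinatorial formula were desired.
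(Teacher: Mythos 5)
Your proposal is correct and follows essentially the same route as the paper: both invoke the closed-form Gaussian moment-generating function and the fact that product moments are obtained by differentiating it at $\bm{t} = \bm{0}$. Your version is in fact slightly more careful than the paper's, since you explicitly justify differentiation under the expectation and note why the resulting derivatives are polynomials in the entries of $\bm{\mu}$ and $\bm{\Sigma}$.
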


\begin{proof}
This follows from a straightforward use of the Gaussian moment-generating function, which is:
\begin{equation}
M_{\bm{X}}(\bm{t}) = \exp\left(\bm{t}^T \bm{\mu} + \dfrac{1}{2} \bm{t}^T \bm{\Sigma} \bm{t}\right).
\end{equation}
Then by the moment calculation property, any arbitrary product moment can be calculated with:
\begin{equation} \label{e_mgf}
\dfrac{\partial^a}{\prod_{i = 1}^p \partial t_i^{a_i}} M_{\bm{X}}(\bm{t}) \lvert_{\bm{t} = 0} = \mathbb{E} \left[ \prod_{i = 1}^{p} X^{a_i}_i \right],
\end{equation}
where $a = \sum_{i = 1}^p a_i$ and all $a_i$ take non-negative integer values.
\end{proof}

From here, the expectation in the form of Equation \ref{e_condProd} can be obtained by seeing that the parameters of $\bm{X}_M|\bm{X}_O$ are:
\begin{equation} \label{e_condGauss}
\begin{aligned}
\bm{\mu}_c &= \bm{\mu}_M + \bm{\Sigma}_{MO} \bm{\Sigma}_O^{-1}(\bm{x}_O - \bm{\mu}_O) \\
\bm{\Sigma}_c &= \bm{\Sigma}_M - \bm{\Sigma}_{MO} \bm{\Sigma}_O^{-1} \bm{\Sigma}_{OM},
\end{aligned}
\end{equation}
which follows from the well known parameterization of conditioning on Gaussian random vectors.
Then by applying Lemma \ref{lem:gaussian_mgf} on $\bm{X}_M|\bm{X}_O$, we obtain any of its product moments in terms of $\bm{\mu}$ and $\bm{\Sigma}$.
Thus, the latter 8 entries of $\mathbb{E}\left[ \bm{T}(\bm{U}) | \bm{U}_O \right]$ can be written in terms of $\bm{\theta}$ analytically.

Among the remaining 4 sufficient statistics, we turn our attention to $Y$, $Y X_j$ and $Y X_j X_k$.
Notice that we can consider a general expression that encapsulates the expectation of all three of these statistics by writing them as $\mathbb{E}[Y X^a_j X^b_k | \bm{X}_O ]$ for $a, b \in \{0, 1\}$.
Then we can re-write this quantity as:
\begin{equation} \label{eq:mdp1_yxjxk}
\mathbb{E}[Y X^a_j X^b_k | \bm{X}_O] = \mathbb{E} [\bm{d}(\bm{X})^T \bm{\beta} X^a_j X^b_k | \bm{X}_O],
\end{equation}
which follows from applying the law of total probability and Bayes' rule (see Appendix \ref{a_mdp1y1} for an explicit proof).
Noting that $\bm{d}(\bm{X})^T\bm{\beta}$ is a linear combination of products of $\bm{X}$, we apply Lemma \ref{lem:gaussian_mgf} with the linearity properties of the expectation operator to obtain $\mathbb{E}[Y X^a_j X^b_k | \bm{X}_O]$ in terms of $\bm{\theta}$.
Thus the solution for these expectations can be derived analytically as well.

Finally, the remaining expectation is $\mathbb{E}[Y^2 | \bm{X}_O]$.
This is derived as follows:
\begin{equation} \label{eq:mdp1_y2}
\begin{aligned}
\mathbb{E} \left[Y^2 | \bm{X}_O \right] &= \E \left[ \E\left[Y^2 \middle| \bm{X}_M, \bm{X}_O \right ] \middle| \bm{X}_O \right ] \\
&= \E \left[ \Var(Y | \bm{X}) + \E\left[Y | \bm{X} \right ]^2 \middle| \bm{X}_O \right ] \\
&= \E \left[ \sigma^2_{\epsilon} + (\bm{\beta}^T \bm{d}(\bm{X}))^2 \middle| \bm{X}_O \right ] \\
&= \sigma^2_{\epsilon} + \E \left[ \bm{\beta}^T \bm{d}(\bm{X}) \bm{d}(\bm{X})^T \bm{\beta} \middle| \bm{X}_O \right ] \\
&= \sigma^2_{\epsilon} + \E \left[ \sum_{i,j} \beta_i \beta_j (\bm{d}(\bm{X}) \bm{d}(\bm{X})^T)_{ij} \middle| \bm{X}_O \right ],
\end{aligned}
\end{equation}
where $(\bm{d}(\bm{X}) \bm{d}(\bm{X})^T)_{ij}$ refers to the $(i, j)$th element of $\bm{d}(\bm{X}) \bm{d}(\bm{X})^T$.
Once again, since each entry in the matrix $\bm{d}(\bm{X})\bm{d}(\bm{X})^T$ is a linear combination of products of $\bm{X}$, we can apply Lemma \ref{lem:gaussian_mgf} and the linearity of expectation to write $\mathbb{E} \left[Y^2 | \bm{X}_O \right]$ in terms of $\bm{\theta}$.
Thus finally, we have shown that all entries of $\E[ \bm{T}(\bm{U}) | \bm{U}_O]$ can be written as analytic functions of $\bm{\theta}$ under MDP 1.

\subsection{$Y$ is Observed, \boldmath{$X$} Admits No Fully Missing Products in $\boldmath{d}(\boldmath{X})$} \label{sec:mdp2}

MDP 2 considers the scenario where $Y$ is observed and $\bm{X}$ is patterned such that no product terms are fully missing.
Equivalently, we can say that $\bm{X}$ is patterned such that at least one $X_j$ is observed in every product term.
In this situation, $\bm{X}_M | Y, \bm{X}_O$ takes on a multivariate Gaussian distribution, and thus $\E[\bm{T}(\bm{U}) | \bm{U}_O]$ can be completely solved analytically.
To see why this is the case, let us re-write the analytical model in Equation \ref{regressionModel} under the assumptions of MDP 2.
First, note that we can separate terms by observed variables and missing variables:
\begin{equation}
\begin{aligned}
Y &= \bm{d}(\bm{X})^T \bm{\beta} + \epsilon \\
&= \beta_0 + \sum_{j = 1}^p \beta_j X_j + \sum_{j \neq k} \beta_{jk} X_j X_k + \epsilon \\
&= \beta_0 + \sum_{j \in O} \beta_j X_j + \sum_{j \in M} \beta_j X_j + \sum_{(j, k) \in O} \beta_{jk} X_j X_k + \sum_{j \in M, \, k \in O} \beta_{jk} X_j X_k + \epsilon.
\end{aligned}
\end{equation}
Then, we can regard all $\bm{X}_O$ as constants and absorb them into the intercept and product-term coefficients as follows:
\begin{equation}
\begin{aligned}
\tilde{\beta}_0 &\coloneqq \beta_0 + \sum_{j \in O} \beta_j X_j + \sum_{(j, k) \in O} \beta_{jk} X_j X_k \\
\tilde{\beta}_j &\coloneqq \beta_j + \beta_{jk}X_k, \text{ for } k \in O.
\end{aligned}
\end{equation}
This allows us to re-write the model only in terms of the missing variables as:
\begin{equation}
Y = \tilde{\beta}_0 + \sum_{j \in M} \tilde{\beta}_j X_j + \epsilon,
\end{equation}
from which we can write for any fixed $m \in M$:
\begin{equation}
X_m = \dfrac{Y - \tilde{\beta}_0 - \sum_{j \in M \setminus m} \tilde{\beta}_j X_j - \epsilon}{\tilde{\beta}_m}.
\end{equation}
Thus, any $X_m$ is a linear combination of other Gaussian random variables, therefore must be Gaussian itself.
Hence $\bm{X}_M | Y, \bm{X}_O$ follows a multivariate Gaussian distribution.
The derivation of the exact probability distribution $\Prob(\bm{X}_M | Y, \bm{X}_O)$ can be found in Appendix~\ref{sec:proof_mdp2}.

Since $Y$ is observed in this missing data pattern, $\E[\bm{T}(\bm{U}) | \bm{U}_O]$ only concerns product functions of $\bm{X}$.
Hence, we only need to apply Lemma \ref{lem:gaussian_mgf} to obtain these expectations, as $\bm{X}_M | Y, \bm{X}_O$ is a multivariate Gaussian.
Thus, under MDP 2, $\E[\bm{T}(\bm{U}) | \bm{U}_O]$ can be written as a function of $\bm{\theta}$ and solved analytically.

\subsection{$Y$ is Observed, \boldmath{$X$} Admits Fully Missing Products in $\boldmath{d}(\boldmath{X})$}

MDP 3 concerns the case where $Y$ is observed and $\bm{X}$ is patterned such that product terms are fully missing.
In this situation, the entries of $\E[\bm{T}(\bm{U}) | \bm{U}_O]$ may be difficult to derive analytically, or admit no closed-form.
This can be seen from the following characterization of $X_m$ for $m \in M$ under MDP 3:
\begin{equation}
\begin{aligned}
Y &= \bm{d}(\bm{X})^T \bm{\beta} + \epsilon \\
&= \beta_0 + \sum_{j = 1}^p \beta_j X_j + \sum_{j \neq k} \beta_{jk} X_j X_k + \epsilon \\
&= \beta_0 + \beta_m X_m + \sum_{j \neq m} \beta_j X_j + \sum_{j \neq m} \beta_{jm} X_j X_m + \sum_{j \neq k \neq m} \beta_{jk} X_j X_k + \epsilon \\
&= \beta_0 + \left(\beta_m + \sum_{j \neq m} \beta_{jm} X_j \right) X_m + \sum_{j \neq m} \beta_j X_j + \sum_{j \neq k \neq m} \beta_{jk} X_j X_k + \epsilon \\
\Rightarrow X_m &= \dfrac{Y - \beta_0 - \sum_{j \neq m} \beta_j X_j - \sum_{j \neq k \neq m} \beta_{jk} X_j X_k - \epsilon}{\beta_m + \sum_{j \neq m} \beta_{jm} X_j}.
\end{aligned}
\end{equation}
From here we can see that $X_m$ is a sum consisting of Gaussian ratio and product Gaussian ratio random variables.
The moments or moment generation function of such random variables are difficult to derive and not readily available.
Thus, this is the only missing data pattern for which numerical integration is used to obtain $\E[\bm{T}(\bm{U}) | \bm{U}_O]$.
That is, we approximate $\E[\bm{T}(\bm{U}) | \bm{U}_O]$ with
\begin{equation}
\begin{aligned}
\E[\bm{T}(\bm{U}) | \bm{U}_O] &= \dfrac{\int_{\bm{u}_M} \bm{T}(\bm{u}_M, \bm{u}_O) \Prob(\bm{u}_M | \bm{u}_O)\, d\bm{u}_M}{\int_{\bm{u}_M} \Prob(\bm{u}_M | \bm{u}_O)\, d\bm{u}_M} \\
&= \dfrac{\Prob(\bm{u}_O)}{\Prob(\bm{u}_O)} \dfrac{\int_{\bm{u}_M} \bm{T}(\bm{u}_M, \bm{u}_O) \Prob(\bm{u}_M, \bm{u}_O)\, d\bm{u}_M}{\int_{\bm{u}_M} \Prob(\bm{u}_{Mg}, \bm{u}_O)\, d\bm{u}_M} \\
&\approx \dfrac{\sum_{g = 1}^{G} \bm{T}(\bm{u}_{Mg}, \bm{u}_O) \Prob(\bm{u}_{Mg}, \bm{u}_O)}{\sum_{g = 1}^{G} \Prob(\bm{u}_{Mg}, \bm{u}_O)},
\end{aligned}
\end{equation}
where $\bm{u}_{Mg}$ is the $g$th grid point over the domain of $\bm{u}_M$ for numerical integration.
Note that the purpose of dividing by $1 = \int_{\bm{u}_M} \Prob(\bm{u}_M | \bm{u}_O)\, d\bm{u}_M$ is to cancel out $\Prob(\bm{u}_O)$ from the numerator.
This allows us to perform calculations in terms of $\Prob(\bm{u}_M, \bm{u}_O)$, rather than $\Prob(\bm{u}_M | \bm{u}_O)$, thus the latter need not be derived.

\subsection{Summary of Results}

We propose to construct a hybrid EM method that uses the analytic results derived in this Section for MDPs~1 and~2, and numerical integration for MDP~3.
A summary of these results are displayed in Table~\ref{tab:mdp_summary}.
This table essentially serves as a map that provides us with the method of calculation for any sufficient statistic's conditional expectation given any MDP.

\begin{table}[!hb]
\centering
\aboverulesep=0ex
\belowrulesep=0ex
\renewcommand{\arraystretch}{1.5}
\begin{tabular}{|c|c|c|c|c|c|}
\toprule
$\E[\bm{T}(\bm{U}) | \bm{U}_O]$ & $Y^2$ & $Y$ & $Y \overrightarrow{X^{}_j}^T$ & $Y \overrightarrow{X^{}_j X^{}_k}^T$ & $\mathbb{E} [ \prod_{i = 1}^p X^{a_i}_i ]$ \\
\midrule
MDP 1 & Eq.~\ref{eq:mdp1_y2} & \multicolumn{3}{c|}{Eq.~\ref{eq:mdp1_yxjxk}} & \multirow{2}{*}{Lemma~\ref{lem:gaussian_mgf}} \\
\cline{1-5}
MDP 2 & \multicolumn{4}{c|}{\multirow{2}{*}{Observed}} & \\
\cline{1-1} \cline{6-6}
MDP 3 & \multicolumn{4}{c|}{} & Numerical Integration \\
\bottomrule
\end{tabular}
\caption{Mosaic table of techniques to obtain $\E[\bm{T}(\bm{U}) | \bm{U}_O]$ by MDP.}
\label{tab:mdp_summary}
\end{table}

\section{Empirical Studies}

Given that the hybrid EM algorithm minimizes the use of numerical approximations, we now investigate the impact this has on data analysis.
This was done via two empirical studies: a traditional simulation study over a variety of conditions, and a smaller scale simulation study using real data with psychological measures.
We compared empirical parameter bias, empirical mean square error, confidence interval coverage rates, and computation time between the hybrid EM method and full numerical integration.

\subsection{Simulation Conditions} \label{sec:sim_conditons}

For the traditional simulation study, we sought to study estimator performance over a variety of realistic settings.
These settings were
\begin{itemize}
  \item Estimation method: hybrid EM (HYB) and full numerical integration (NI).
  \item Sample size ($n$): 50, 100, 250, 500, and 1000.
  \item Number of predictors ($p$): 3, 5, 10, and 15.
  \item Proportion of missingness ($\varphi_{\text{MIS}}$): 0.10, 0.20, and 0.30.
  \item Proportion among missing data from MDP 3 ($\varphi_{\text{MDP3}}$):  0, 0.05, 0.10, 0.15, 0.20, 0.25, 0.50, 0.75, and 1.
\end{itemize}
A fully factorial design was used for a total of $2 \times 5 \times 4 \times 3 \times 9 = 1080$ conditions.
The number of simulations per condition was set to 100.

\subsection{Data Generation} \label{sec:data_generate}

The parameters for $\bm{X}$ were generated in the following way:
\begin{equation}
\begin{aligned}
\bm{\mu} &\sim \mathcal{U}_p(-3, 3) \\
\bm{\Sigma} &= \bm{D}\bm{C}\bm{D},
\end{aligned}
\end{equation}
where $\bm{D}$ is a diagonal matrix of variances, with the diagonal distributed as $\mathcal{U}_p(1, \sqrt{3})$ and $\bm{C}$ is a constant correlation matrix with a unity diagonal and off-diagonal entries of 0.3.
Thus, each $\bm{\Sigma}$ is generated from the same underlying correlation matrix, but scaled accordingly with random variance entries.
Once these parameters were drawn, $\bm{X}$ was sampled from $\mathcal{N}_p(\bm{\mu}, \bm{\Sigma})$.

The parameters of the model were generated as follows.
First, the number of product-terms in the model was set to half the number of predictors rounded down, i.e., $\left \lfloor{\dfrac{p}{2}}\right \rfloor$.
Therefore the number of design variables in the model was
\begin{equation}
d = 1 + p + \left \lfloor{\dfrac{p}{2}}\right \rfloor,
\end{equation}
including the intercept.
$\bm{X}$ variables were chosen at random (uniformly) to form product-terms in $\bm{d}(\bm{X})$.
Then, an adjusted $R^2$ parameter and $\bm{\beta}$ vector were simulated using:
\begin{equation}
\begin{aligned}
R^2_a &\sim \mathcal{U}_1(0.1, 0.5) \\
\bm{\beta} &\sim \mathcal{U}_d(-3, 3).
\end{aligned}
\end{equation}
Given a sample of $\bm{x}$ vectors and a sampled $\bm{\beta}$, we can algebraically solve for a $\sigma^2_{\epsilon}$ such that the drawn $R^2_a$ is achieved (see Appendix~\ref{sec:gen_sig_e} for details).
This allows us to draw errors terms with $\epsilon \sim \mathcal{N}(0, \sigma^2_{\epsilon})$ and finally calculate the outcome with $Y = \bm{d}(\bm{X})^T \bm{\beta} + \epsilon$.

Once $\bm{X}$ and $Y$ were generated, the observed data indicator $\bm{R}$ was generated under a MAR mechanism.
This was done by randomly selecting a non-product variable in $\bm{X}$ (with uniform probability) to serve as an always-observed ``missingness anchor'' variable designated $X_a$, to determine missingness in all other variables.
This ensured the MAR assumption was always met.
Using an intermediate latent propensity variable based on $X_a$, we assigned MDPs~1, 2, and~3 such that both  $\varphi_{\text{MIS}}$ and $\varphi_{\text{MDP3}}$ were met according to the simulation condition and MDPs~1 and~2 were divided equally among the remaining missing cases.
The exact mathematical details of this procedure can be found in Appendix~\ref{sec:gen_mis}.


\subsection{Estimation Methods}

The EM algorithm outlined in Sections~\ref{sec:em_alg} and~\ref{sec:em_prod} was used to estimate the simulated datasets via the HYB and NI methods.
The true model specification that was generated along with the data was assumed to be known during estimation.
The HYB method used this information to categorize cases according to MDPs 1, 2, and 3 as defined in Section~\ref{sec:mdp}.
For cases under MDPs 1 and 2, the \textit{E}-step was calculated analytically according to the methods described in Sections~\ref{sec:mdp1} and~\ref{sec:mdp2}, respectively.
Numerical integration was used by the HYB method for cases under MDP 3, and by the NI method for all MDPs. The minimum number of integration points was set to be as close to 1,000, as possible.
Following the method used by \citeA{Robitzsch2021}, we used Riemann midpoint numerical integration as exemplified in Figure~\ref{fig:ni_ex}.
Confidence intervals were computed via non-parametric bootstrap \cite{Efron1994} using 1,000 bootstrap samples.
To minimize programming and computational variations between the two methods, both the HYB and NI methods were programmed using as much overlapping code as possible.
This was done in the \texttt{R} language \cite{R2021}, primarily using the \texttt{Rcpp} \cite{Eddelbuettel2011} and \texttt{RcppArmadillo} \cite{Eddelbuettel2014} packages, and the \texttt{Armadillo} package \cite{Sanderson2016} for the \texttt{C++} language.

\subsection{Performance Metrics}

Estimator performance metrics were empirical bias, empirical mean square error (MSE), 95\% confidence interval coverage rates, and computation time elapsed.
For bias, MSE, and 95\% confidence interval coverage rates, their component-wise counterparts were collected per coefficient per dataset.
That is, we collected
\begin{equation} \label{eq:outcomes}
\begin{aligned}
\text{Deviation} &\coloneqq \hat{\beta}_j - \beta_j \\
\text{Square Error} &\coloneqq (\hat{\beta}_j - \beta_j)^2 \\
\text{Coverage} &\coloneqq I(\widehat{LL}_j \leq \beta_j \leq \widehat{UL}_j),
\end{aligned}
\end{equation}
where $\widehat{LL}_j$ and $\widehat{UL}_j$ are the 2.5\% and 97.5\% empirical percentiles over the bootstrapped samples for $\hat{\beta}_j$ and  $I(\cdot)$ is the indicator function.
The average of these quantities provide the empirical bias, empirical MSE, and 95\% confidence interval coverage rates, respectively.
We analyzed the performance metrics by plotting the marginal means and conducting descriptive linear regressions of our outcomes as a function of simulation settings: method, $n$, $p$, $\varphi_{\text{MIS}}$, and $\varphi_{\text{MDP3}}$.


\subsection{Results}

Marginal plots of empirical bias, empirical MSE, coverage rates and time outcomes are displayed in Figures~\ref{fig:mar_diff}, \ref{fig:mar_diff2}, \ref{fig:mar_cover}, and~\ref{fig:mar_time}, respectively.
The results of the linear regressions are displayed in Tables~\ref{tab:lin_dev}, \ref{tab:lin_sqe}, \ref{tab:lin_cov}, and~\ref{tab:lin_time}, also respectively.
For the regressions, method was indicator coded ($\text{HYB} = 0, \text{NI} = 1$) and interacted with all predictors.
The order variable was included as a control (via indicator coding), and is defined as the number of $\bm{X}$ variables that serve as the the coefficient's regressor.
This is zero for intercepts, one for first-order terms ($\text{Order}_1$), and two for product-order terms ($\text{Order}_2$).

\subsubsection{Empirical Bias}

The marginal plot on empirical bias shows that both HYB and NI are generally unbiased across all conditions.
The regression coefficients shows that bias does not vary linearly with any of the predictors for HYB, and only very small effects may exist for NI, if at all.
Since the maximum likelihood estimates of $\bm{\beta}$ are unbiased, the missing data mechanism is MAR, and all models are guaranteed to be correctly specified, we would expect any bias to come from the method of estimation.
Our results show very little empirical bias incurred by both methods.

\subsubsection{Empirical Mean Square Error}

The marginal plots show empirical MSE decreasing with $n$ for both HYB and NI.
This is congruent with maximum likelihood theory, as $\hat{\bm{\beta}}$ is known to be asymptotically consistent.
However, the effect is more pronounced for HYB than NI, with the difference between the two being smaller at lower $n$.
This is also reflected by the linear regression cofficients.

For $p$, the MSE increases in a nearly linear fashion, with NI being higher than HYB.
This can be attributed to the fact that both HYB and NI implement numerical integration in their algorithms, which loses accuracy as the number of variables increase.
Further, the difference between the two methods slightly increases as $p$ increases, which is also reflected by the descriptive linear regressions.
This can be accounted for by the fact that HYB uses less numerical integration than NI.

Both methods see an increase in MSE as $\varphi_{\text{MIS}}$ increases in a nearly linear fashion.
The variance of the MLE is well-known to increase with the proportion of missing data, also known as the missing information principle \cite{Orchard1972}.
Since MSE is the squared bias plus the variance of the estimate, it follows that MSE would increase with $\varphi_{\text{MIS}}$.
Once again, both the marginal plot and descriptive regression shows that the NI method has a higher rate of increase than HYB.
This may be attributable to the fact that NI requires more numerical integration than HYB as $\varphi_{\text{MIS}}$ increases.

The marginal plot of $\varphi_{\text{MDP3}}$ shows that the MSE of NI is greater than (or equal to) that of HYB.
The MSE of NI was fairly unaffected by changes in $\varphi_{\text{MDP3}}$, while the MSE of HYB increased with $\varphi_{\text{MDP3}}$.
The means are identical at $\varphi_{\text{MDP3}} = 1$, reflecting the fact that both methods use full numerical integration in this condition.

\subsubsection{Coverage Rate}

The marginal plots show increases in coverage rates toward 0.95 as $n$ increases, with HYB being close to 0.95 than NI over all $n$.
This is attributable to the fact that bootstrap approximations become more accurate as the number samples in the original data increase (Glivenko-Cantelli theorem; \citeNP{Tucker1959}).
However, with $p$, $\varphi_{\text{MIS}}$, and $\varphi_{\text{MDP3}}$, we observe little to no trends, with coverage rates at about 0.93 to 0.94, overall exhibiting slight under-coverage.
Across all three variables, HYB is generally closer to the nominal rate than NI.
The one exception is when $\varphi_{\text{MDP3}} = 1$ as both methods are identical.

\subsubsection{Time}

Both the marginal plots and regression coefficients show that the time to complete estimation increases with all four variables $n$, $p$, $\varphi_{\text{MIS}}$, and $\varphi_{\text{MDP3}}$.
For the variables $n$, $p$, and $\varphi_{\text{MIS}}$, there is very little difference between HYB and NI.
For $\varphi_{\text{MDP3}}$ however, we see that HYB is substantially faster than NI at lower $\varphi_{\text{MDP3}}$, with the difference between the two decreasing to zero at $\varphi_{\text{MDP3}} = 1$.

\subsubsection{Summary}

A high-level summary of the marginal trends is as follows.
\begin{itemize}
  \item Both methods exhibited very little bias.
  \item The MSE of HYB was less than or equal to NI across all conditions. The difference increased with $n$, $p$, and $\varphi_{\text{MIS}}$, and decreased with $\varphi_{\text{MDP3}}$.
  \item Under most conditions, HYB had coverage rates closer to the nominal rate than NI. Generally, the difference was within 1\%, and both methods slightly under-covered in all conditions.
  \item The time difference between the methods show that HYB is generally faster than NI, though the difference is small. The exception is when $\varphi_{\text{MDP3}} = 0$, where the difference is substantial, but shrinks to zero as $\varphi_{\text{MDP3}}$ approaches 1.
\end{itemize}

\clearpage

\FloatBarrier
\begin{figure}[t]
\centering
\includegraphics[width=\textwidth, keepaspectratio]{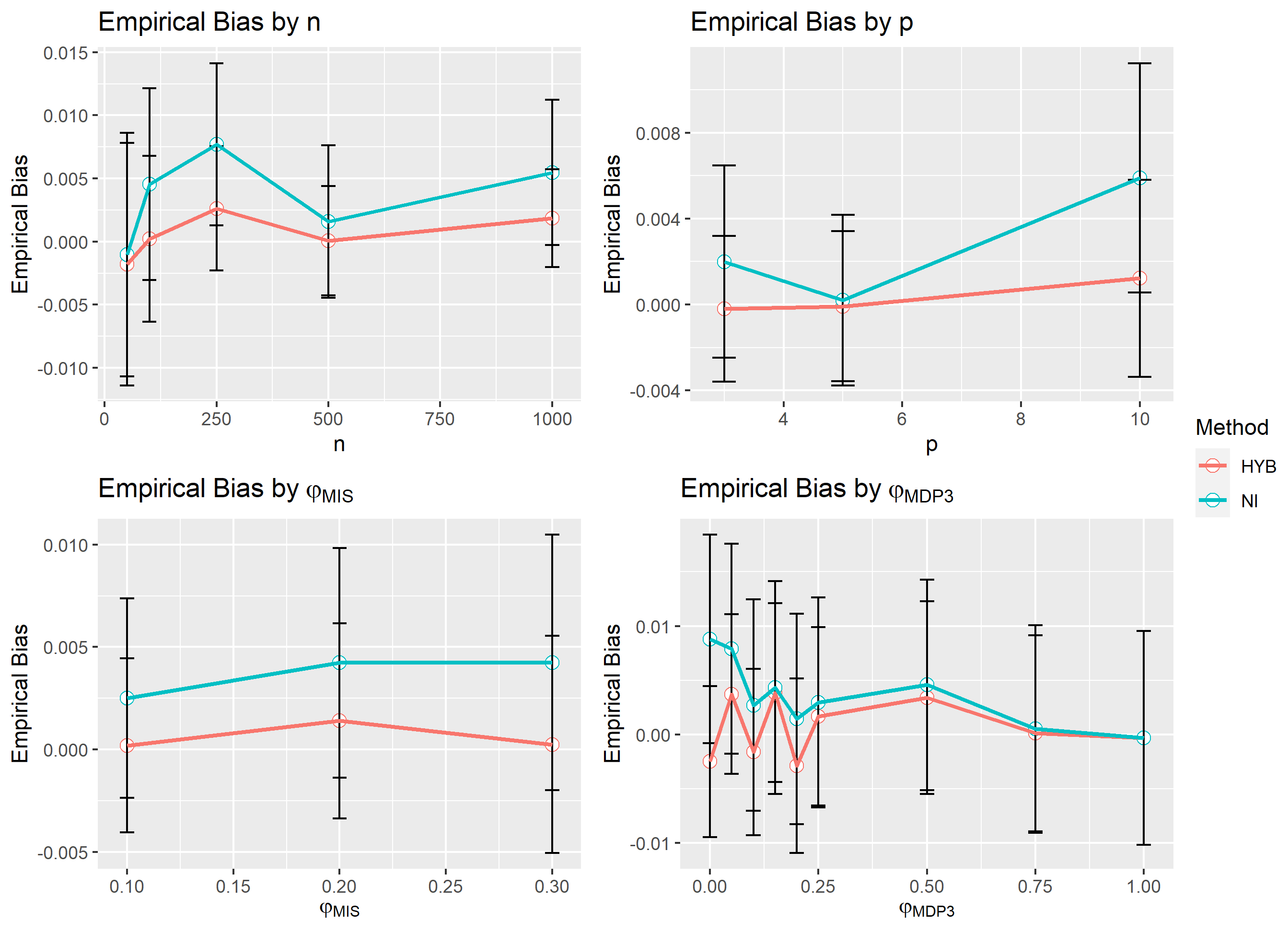} \\
\caption{Marginal view of bias by $n$, $p$, $\varphi_{\text{MIS}}$, and $\varphi_{\text{MDP3}}$, stratified by method.
Error bars indicate 95\% confidence intervals.}
\label{fig:mar_diff}
\end{figure}
\begin{table}[b]
\centering
\begin{tabular}{cS[table-format = 2.2e1]S[table-format = 2.2e1]S[table-format = 2.2e1]c}
\toprule
Coefficient & {Estimate} & {S. E.} & {$t$-value} & {$p$-value} \\
\midrule
Intercept  &  6.85e-03 & 7.50e-03 & 9.13e-01 & $\leq$0.361 \\
$n$  &  1.27e-06 & 4.34e-06 & 2.93e-01 & $\leq$0.770 \\
$p$  &  -1.30e-04 & 3.50e-04 & -3.72e-01 & $\leq$0.710 \\
$\rho_{\text{MIS}}$  &  -1.24e-02 & 1.85e-02 & -6.72e-01 & $\leq$0.502 \\
$\rho_{\text{MDP3}}$  &  1.29e-03 & 4.63e-03 & 2.79e-01 & $\leq$0.780 \\
NI  &  -1.03e-01 & 1.06e-02 & -9.70e+00 & $\leq$0.001 \\
NI $\times n$  &  3.29e-05 & 6.11e-06 & 5.37e+00 & $\leq$0.001 \\
NI $\times p$  &  9.85e-03 & 4.94e-04 & 2.00e+01 & $\leq$0.001 \\
NI $\times \rho_{\text{MIS}}$  &  1.90e-01 & 2.60e-02 & 7.30e+00 & $\leq$0.001 \\
NI $\times \rho_{\text{MDP3}}$  &  -6.58e-02 & 6.53e-03 & -1.01e+01 & $\leq$0.001 \\
\midrule
$\text{Order}_1$  &  -5.13e-03 & 5.82e-03 & -8.83e-01 & $\leq$0.377 \\
$\text{Order}_2$  &  -3.66e-03 & 6.19e-03 & -5.92e-01 & $\leq$0.554 \\
NI $\times \text{Order}_1$  &  1.24e-02 & 8.20e-03 & 1.51e+00 & $\leq$0.131 \\
NI $\times \text{Order}_2$  &  2.73e-02 & 8.73e-03 & 3.13e+00 & $\leq$0.002 \\
\bottomrule
\end{tabular}
\caption{Descriptive linear model of bias as a function of method, $n$, $p$, $\varphi_{\text{MIS}}$, $\varphi_{\text{MDP3}}$, and Order (control variable), along with their interactions with NI.}
\label{tab:lin_dev}
\end{table}

\FloatBarrier

\FloatBarrier
\begin{figure}[t]
\centering
\includegraphics[width=\textwidth, keepaspectratio]{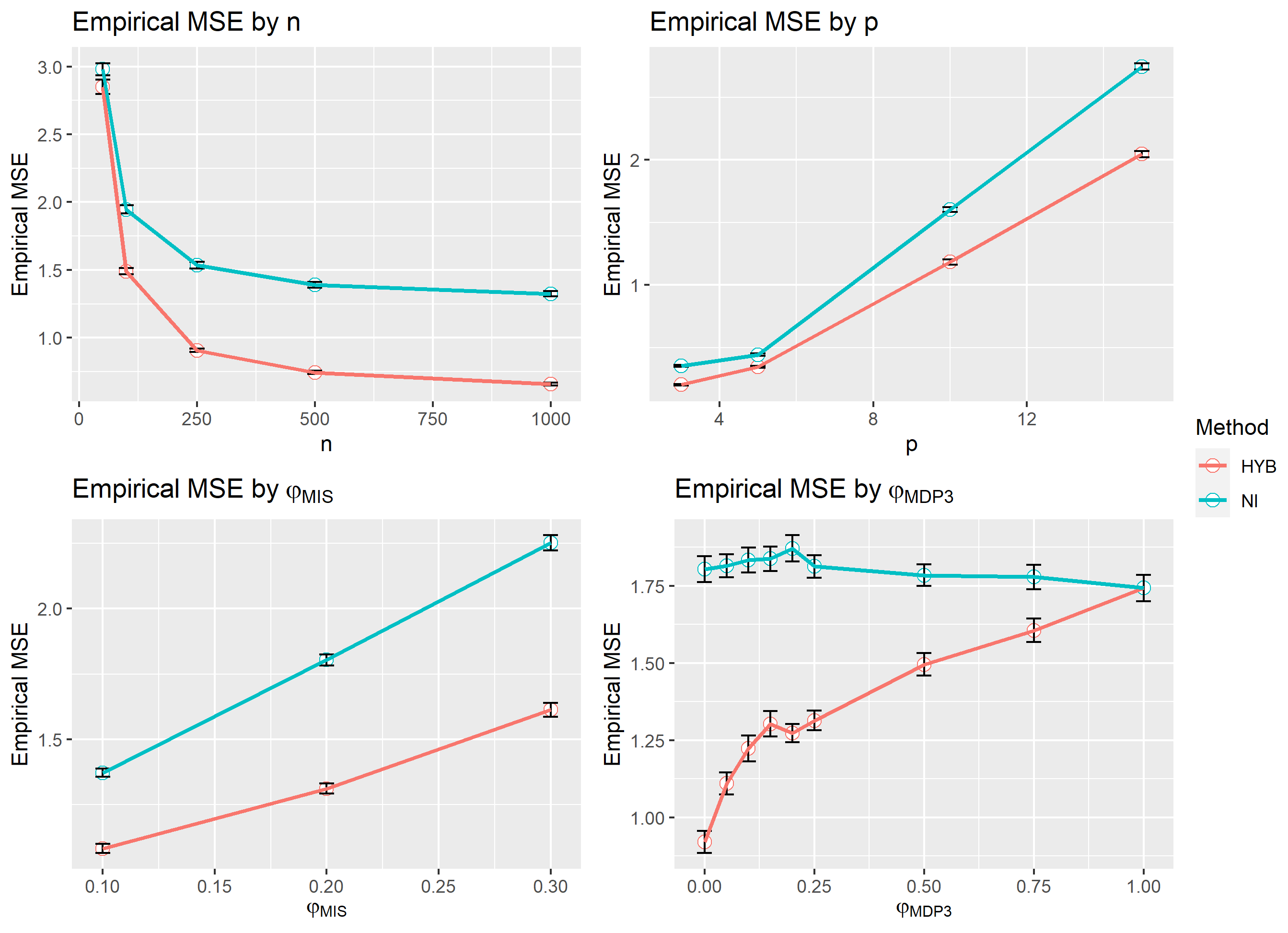} \\
\caption{Marginal view of mean square error by $n$, $p$, $\varphi_{\text{MIS}}$, and $\varphi_{\text{MDP3}}$, stratified by method.
Error bars indicate 95\% confidence intervals.}
\label{fig:mar_diff2}
\end{figure}
\begin{table}[b]
\centering
\begin{tabular}{cS[table-format = 2.2e1]S[table-format = 2.2e1]S[table-format = 2.2e1]c}
\toprule
Coefficient & {Estimate} & {S. E.} & {$t$-value} & {$p$-value} \\
\midrule
Intercept  &  2.13e+00 & 3.17e-02 & 6.71e+01 & $\leq$0.001 \\
$n$  &  -1.67e-03 & 1.83e-05 & -9.12e+01 & $\leq$0.001 \\
$p$  &  1.96e-01 & 1.48e-03 & 1.33e+02 & $\leq$0.001 \\
$\rho_{\text{MIS}}$  &  2.70e+00 & 7.80e-02 & 3.47e+01 & $\leq$0.001 \\
$\rho_{\text{MDP3}}$  &  6.54e-01 & 1.96e-02 & 3.35e+01 & $\leq$0.001 \\
NI  &  5.95e-01 & 4.47e-02 & 1.33e+01 & $\leq$0.001 \\
NI $\times n$  &  5.13e-04 & 2.58e-05 & 1.99e+01 & $\leq$0.001 \\
NI $\times p$  &  6.41e-02 & 2.09e-03 & 3.07e+01 & $\leq$0.001 \\
NI $\times \rho_{\text{MIS}}$  &  1.69e+00 & 1.10e-01 & 1.54e+01 & $\leq$0.001 \\
NI $\times \rho_{\text{MDP3}}$  &  -7.39e-01 & 2.76e-02 & -2.68e+01 & $\leq$0.001 \\
\midrule
$\text{Order}_1$  &  -3.30e+00 & 2.46e-02 & -1.34e+02 & $\leq$0.001 \\
$\text{Order}_2$  &  -3.21e+00 & 2.62e-02 & -1.23e+02 & $\leq$0.001 \\
NI $\times \text{Order}_1$  &  -1.23e+00 & 3.47e-02 & -3.56e+01 & $\leq$0.001 \\
NI $\times \text{Order}_2$  &  -1.14e+00 & 3.69e-02 & -3.08e+01 & $\leq$0.001 \\
\bottomrule
\end{tabular}
\caption{Descriptive linear model of mean square error as a function of method, $n$, $p$, $\varphi_{\text{MIS}}$, $\varphi_{\text{MDP3}}$, and Order (control variable), along with their interactions with NI.}
\label{tab:lin_sqe}
\end{table}

\FloatBarrier

\FloatBarrier
\begin{figure}[t]
\centering
\includegraphics[width=\textwidth, keepaspectratio]{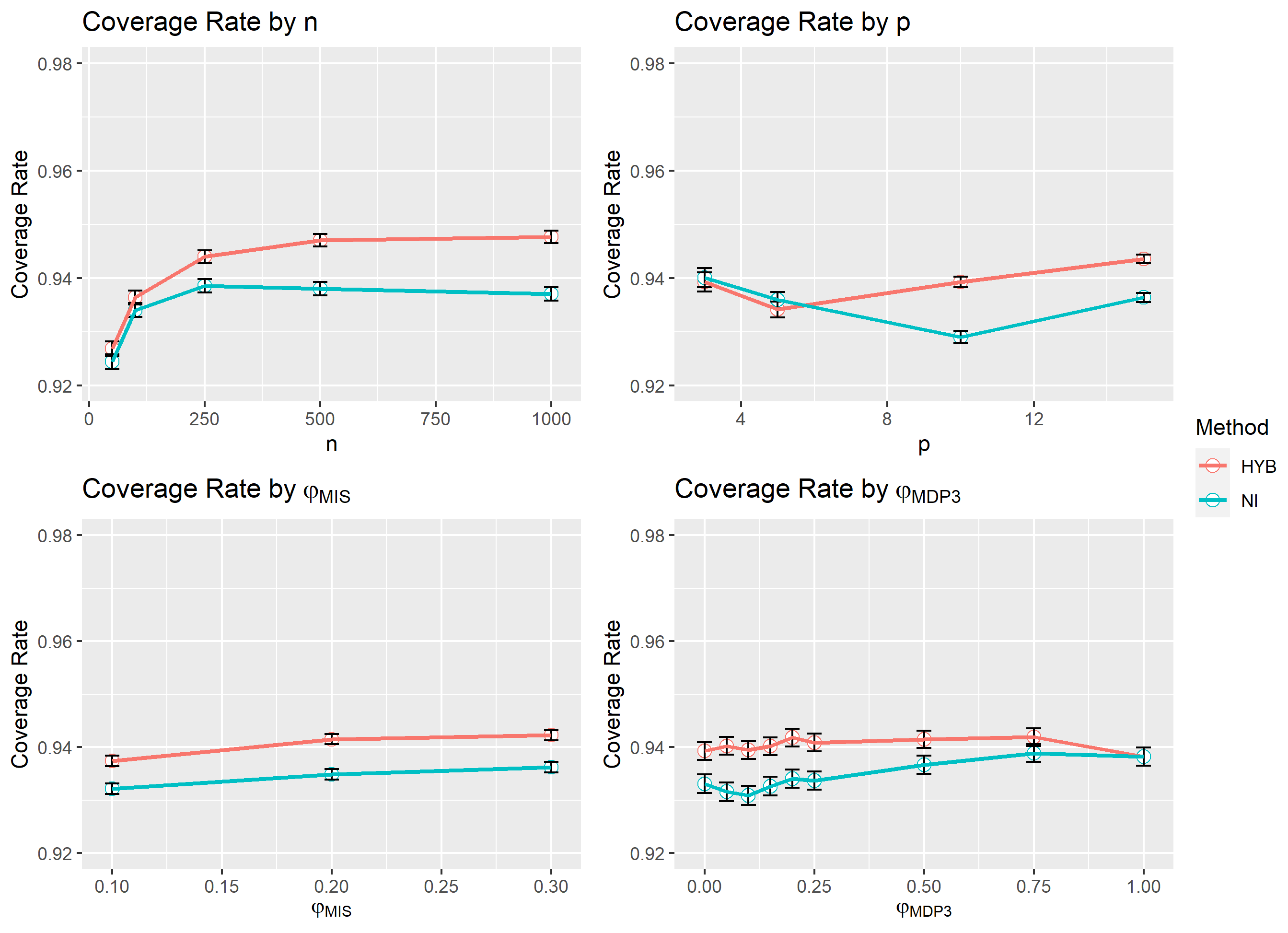} \\
\caption{Marginal view of coverage rate by $n$, $p$, $\varphi_{\text{MIS}}$, and $\varphi_{\text{MDP3}}$, stratified by method.
Error bars indicate 95\% confidence intervals.}
\label{fig:mar_cover}
\end{figure}
\begin{table}[b]
\centering
\begin{tabular}{cS[table-format = 2.2e1]S[table-format = 2.2e1]S[table-format = 2.2e1]c}
\toprule
Coefficient & {Estimate} & {S. E.} & {$t$-value} & {$p$-value} \\
\midrule
Intercept  &  9.23e-01 & 1.44e-03 & 6.39e+02 & $\leq$0.001 \\
$n$  &  1.73e-05 & 8.32e-07 & 2.08e+01 & $\leq$0.001 \\
$p$  &  6.78e-04 & 6.72e-05 & 1.01e+01 & $\leq$0.001 \\
$\rho_{\text{MIS}}$  &  2.44e-02 & 3.54e-03 & 6.88e+00 & $\leq$0.001 \\
$\rho_{\text{MDP3}}$  &  -2.22e-04 & 8.90e-04 & -2.50e-01 & $\leq$0.803 \\
NI  &  3.02e-03 & 2.04e-03 & 1.48e+00 & $\leq$0.139 \\
NI $\times n$  &  -9.16e-06 & 1.18e-06 & -7.78e+00 & $\leq$0.001 \\
NI $\times p$  &  -6.99e-04 & 9.51e-05 & -7.35e+00 & $\leq$0.001 \\
NI $\times \rho_{\text{MIS}}$  &  -3.85e-03 & 5.01e-03 & -7.68e-01 & $\leq$0.442 \\
NI $\times \rho_{\text{MDP3}}$  &  7.89e-03 & 1.26e-03 & 6.27e+00 & $\leq$0.001 \\
\midrule
$\text{Order}_1$  &  -1.03e-03 & 1.12e-03 & -9.20e-01 & $\leq$0.357 \\
$\text{Order}_2$  &  -1.98e-03 & 1.19e-03 & -1.66e+00 & $\leq$0.097 \\
NI $\times \text{Order}_1$  &  -8.37e-05 & 1.58e-03 & -5.30e-02 & $\leq$0.958 \\
NI $\times \text{Order}_2$  &  7.39e-04 & 1.68e-03 & 4.39e-01 & $\leq$0.661 \\
\bottomrule
\end{tabular}
\caption{Descriptive linear model of coverage rate as a function of method, $n$, $p$, $\varphi_{\text{MIS}}$, $\varphi_{\text{MDP3}}$, and Order (control variable), along with their interactions with NI.}
\label{tab:lin_cov}
\end{table}

\FloatBarrier

\FloatBarrier
\begin{figure}[t]
\centering
\includegraphics[width=\textwidth, keepaspectratio]{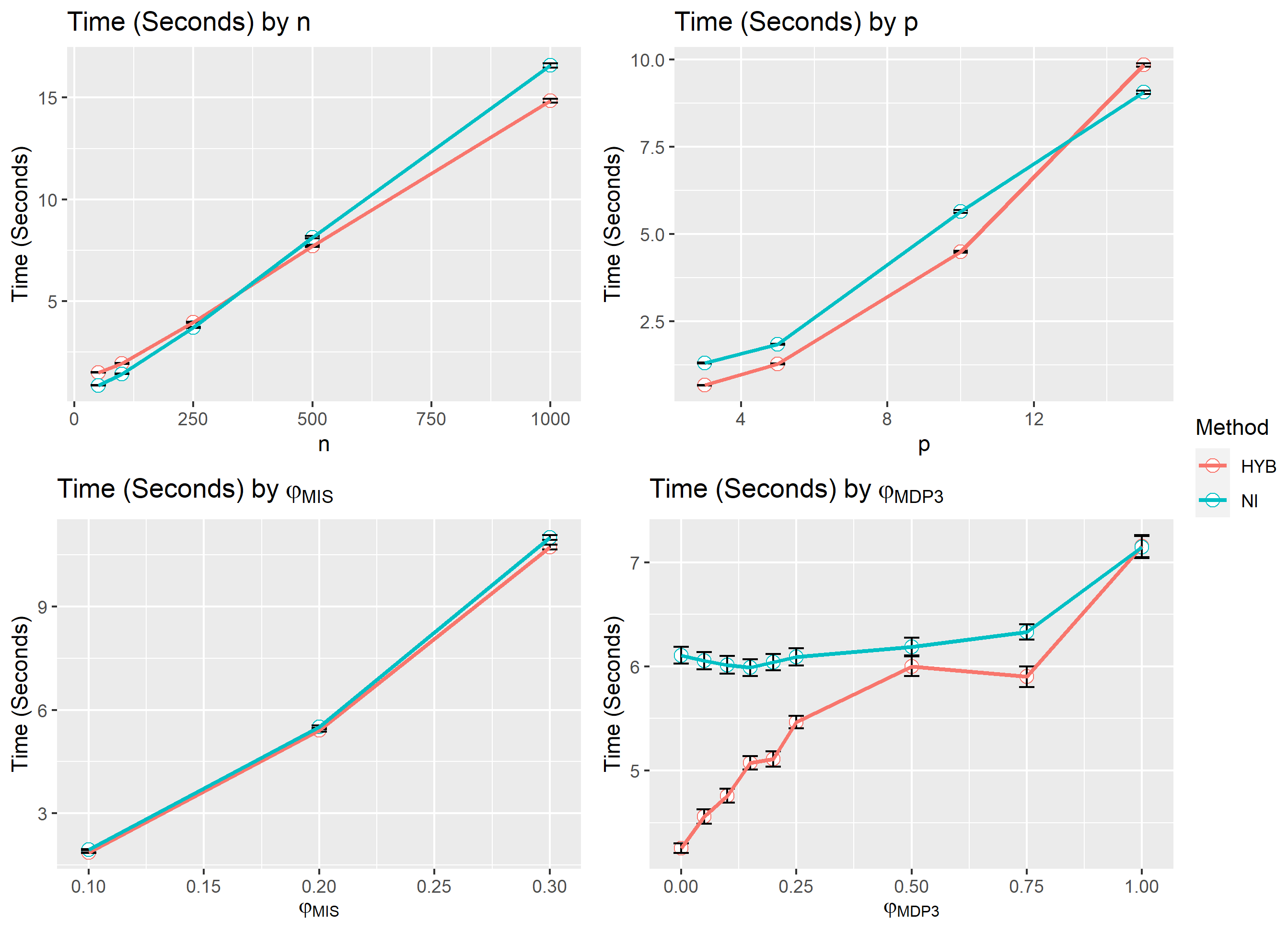} \\
\caption{Marginal view of average time by $n$, $p$, $\varphi_{\text{MIS}}$, and $\varphi_{\text{MDP3}}$, stratified by method.
Error bars indicate 95\% confidence intervals.}
\label{fig:mar_time}
\end{figure}
\begin{table}[b]
\centering
\begin{tabular}{cS[table-format = 2.2e1]S[table-format = 2.2e1]S[table-format = 2.2e1]c}
\toprule
Coefficient & {Estimate} & {S. E.} & {$t$-value} & {$p$-value} \\
\midrule
Intercept  &  -1.75e+01 & 5.33e-02 & -3.29e+02 & $\leq$0.001 \\
$n$  &  1.42e-02 & 3.08e-05 & 4.62e+02 & $\leq$0.001 \\
$p$  &  8.29e-01 & 2.48e-03 & 3.33e+02 & $\leq$0.001 \\
$\rho_{\text{MIS}}$  &  4.44e+01 & 1.31e-01 & 3.39e+02 & $\leq$0.001 \\
$\rho_{\text{MDP3}}$  &  1.48e+00 & 3.29e-02 & 4.50e+01 & $\leq$0.001 \\
NI  &  7.01e-01 & 7.54e-02 & 9.29e+00 & $\leq$0.001 \\
NI $\times n$  &  2.51e-03 & 4.35e-05 & 5.78e+01 & $\leq$0.001 \\
NI $\times p$  &  -1.44e-01 & 3.51e-03 & -4.09e+01 & $\leq$0.001 \\
NI $\times \rho_{\text{MIS}}$  &  9.15e-01 & 1.85e-01 & 4.95e+00 & $\leq$0.001 \\
NI $\times \rho_{\text{MDP3}}$  &  -7.87e-01 & 4.65e-02 & -1.69e+01 & $\leq$0.001 \\
\midrule
$\text{Order}_1$  &  -1.58e-01 & 4.13e-02 & -3.83e+00 & $\leq$0.001 \\
$\text{Order}_2$  &  -2.08e-01 & 4.40e-02 & -4.74e+00 & $\leq$0.001 \\
NI $\times \text{Order}_1$  &  1.23e-01 & 5.84e-02 & 2.11e+00 & $\leq$0.035 \\
NI $\times \text{Order}_2$  &  1.69e-01 & 6.22e-02 & 2.72e+00 & $\leq$0.007 \\
\bottomrule
\end{tabular}
\caption{Descriptive linear model of time as a function of method, $n$, $p$, $\varphi_{\text{MIS}}$, $\varphi_{\text{MDP3}}$, and Order (control variable), along with their interactions with NI.}
\label{tab:lin_time}
\end{table}

\FloatBarrier

\subsection{Real Data Study}

For our second simulation study, datasets were generated by bootstrapping samples using a real dataset as a basis, where simulated missingness was artificially inserted.
We analyzed measures of psychopathology from the Adolescent Brain Cognitive Development (ABCD) Study (https://abcdstudy.org).
The ABCD is a large, multi-site study, whose data are publicly available \cite{Volkow2018}, which was approved by the institutional review boards of the participating sites \cite{Clark2018}.
To avoid potential clustering effects by site and to reduce the sample size to a more realistic scale, one site was selected randomly with uniform probabity to provide the basis of our data ($n = 604$).

We considered a linear model of conduct disorder (CD) as a function of attention deficit hyperactivity disorder (ADHD), depression (DEP), their product-term (ADHD $\times$ DEP), controlled for by anxiety (ANX) and oppositional defiant disorder (ODD).
Previous work has shown comorbidity among these variables \cite{Angold1999, Jensen1997}.
Measures were taken using summary scores of the child behavior checklist (CBCL; \citeNP{Achenbach2001}).
The selected site incidentally had no missing data on these variables.

For each bootstrapped data set, missingness was inserted using the same MAR generating procedure as the traditional simulation study.
We fixed $\varphi_{\text{MIS}} = 0.3$ and $\varphi_{\text{MDP3}} = 1/3$, which set equal proportions among all MDPs.
For performance metrics, we collected deviation and square error as in Equation~\ref{eq:outcomes} from the HYB and NI methods.
The least squares estimates of the original non-missing data set was considered as the true parameters.

Boxplots of the results are displayed in Figure~\ref{fig:rd_boxplot}, and averages are shown in Table~\ref{tab:rd_results}.
On average, the HYB method had biases closer to zero than NI across all parameter estimates.
Likewise, the HYB method also had MSEs closer to zero than NI across all parameter estimates except for the product-term coefficient, where NI had a marginally smaller MSE by $3.9 \times 10^{-4}$.

\clearpage

\FloatBarrier
\begin{figure}[t]
\centering
\includegraphics[width=\textwidth, keepaspectratio]{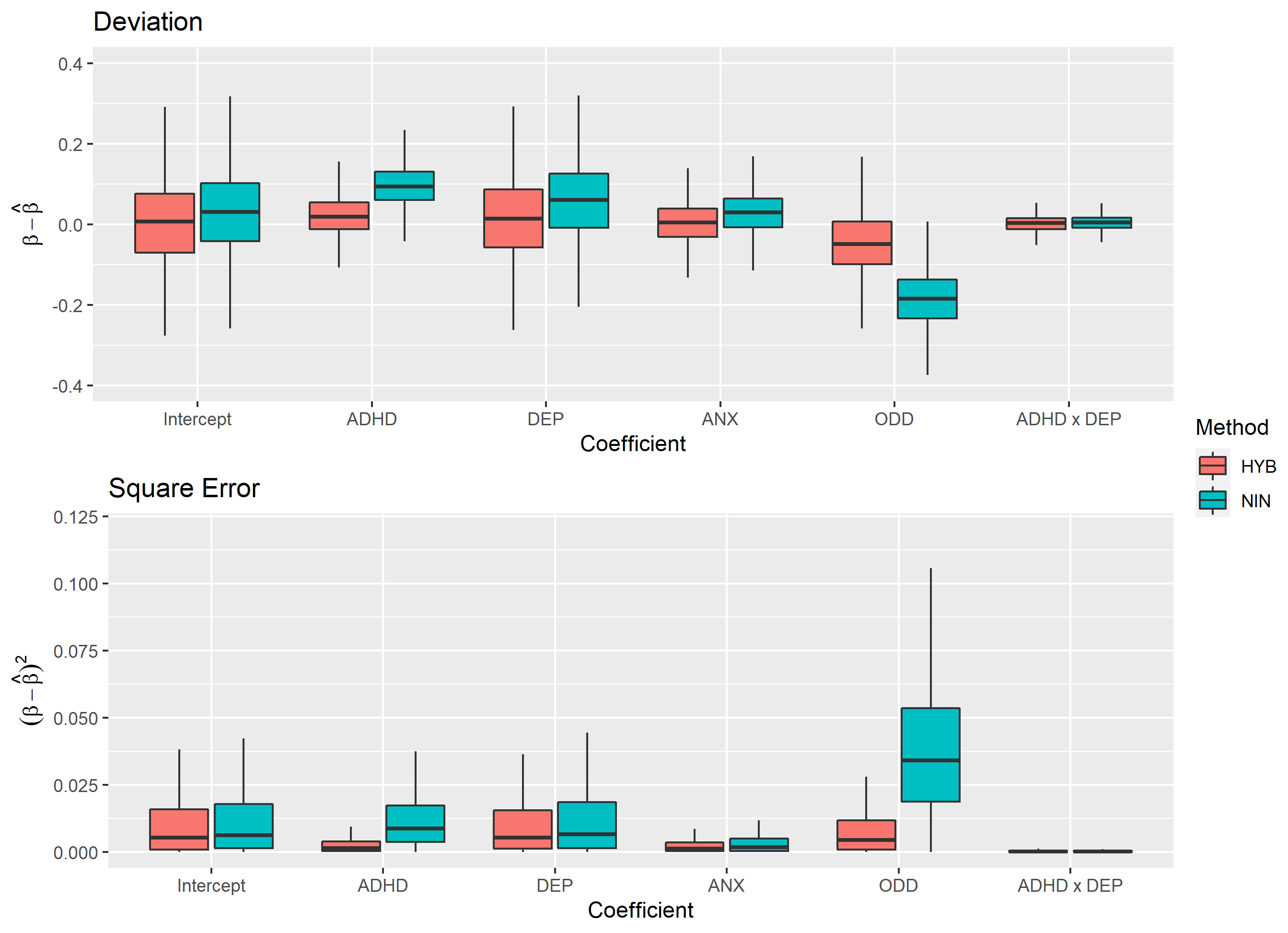} \\
\caption{Boxplots of deviations and square errors of the parameter estimates in the real data simulation.}
\label{fig:rd_boxplot}
\end{figure}
\begin{table}[t]
\centering
\begin{tabular}{cS[table-format = 2.2e1]S[table-format = 2.2e1]S[table-format = 2.2e1]S[table-format = 2.2e1]}
\toprule
 & \multicolumn{2}{c}{{Bias}} & \multicolumn{2}{c}{{MSE}} \\
\midrule
Coefficient       & {HYB}     & {NI}     & {HYB}    & {NI}     \\
\midrule
Intercept         &  0.0074   &  0.0299  & 0.0123   & 0.0131   \\
ADHD              &  0.0217   &  0.0974  & 0.00322  & 0.0122   \\
DEP               &  0.0141   &  0.0603  & 0.0117   & 0.0138   \\
ANX               &  0.0048   &  0.0293  & 0.00286  & 0.00382  \\
OPP               & -0.0463   & -0.184   & 0.00861  & 0.0393   \\
ADHD $\times$ DEP &  0.00161  &  0.00317 & 0.000383 & 0.000344 \\
\midrule
Means             &  0.000543 &  0.00603 & 0.00651  & 0.0138   \\
\bottomrule
\end{tabular}
\caption{Bias and MSE averaged within method and coefficient.}
\label{tab:rd_results}
\end{table}
\FloatBarrier

\section{Discussion}

In this research, we sought to improve the EM algorithm for linear models with two-way product terms by deriving analytic $E$-steps for large classes of MDPs.
These derivations were used to develop a hybrid approach to the EM algorithm, where analytic $E$-steps were used whenever possible and numerical integration was used otherwise.
Through two simulation studies, we showed that the hybrid approach outperformed, or performed as well as, the existing method of full numerical integration on several performance metrics.
The difference between the two methods diminished as the proportion of missing data that require numerical integration for the hybrid approach increased ($\varphi_{\text{MDP3}}$).
Otherwise, the degree to which the hybrid method out performed full numerical integration grew with $n$, $p$, and $\varphi_{\text{MIS}}$.

\subsection{Limitations and Future Extensions}

The current research focused on regression models with two-way product-terms.
While this is a very common model in the psychological sciences, occasionally higher-power polynomials
and multi-way product-terms are called for.
The approach of deriving analytic $E$-steps for two-way product term models generalizes in a straightforward manner to all polynomial forms.
Higher powers of predictors are readily accommodated by Lemma~\ref{lem:gaussian_mgf}, since the Gaussian moment-generating function can be used to obtain any non-negative integer moment.
Arbitrarily large product-terms can also be accommodated, although more limited due to the nature of MDPs~2 and~3.
MDP~2 relies on the fact that any missing $X_j$ is linear with respect to all other missing $X_{-j}$ conditional on the observed data.
This holds as long as both factors of any product-term are not missing simultaneously.
For product-terms with more than two factors, the conditional linearity strategy could only accommodate one factor of the product-term to be missing.
Otherwise, the missing predictor would be conditionally non-linear and numerical integration would be required.
Therefore, as the number of factors in a product-term increases, it becomes less likely that the conditional linearity strategy can be used, diminishing the advantages of the hybrid approach.

Other extensions can be made to discrete predictors.
Ordinal data can be accommodated using existing methods for estimating polychoric correlations assuming the data are a result of discretizing Gaussian variables \cite{Olsson1979}.
For nominal data, techniques from Gaussian mixture models may also be incorporated into estimation.
Such extensions would greatly advance the applicability of the hybrid method as both ordinal and nominal data are commonly used.

For numerical integration, we focused exclusively on the Riemann midpoint method.
Both the speed and accuracy of numerical integration may be improved via other approximating functions (e.g., \citeNP{Liu1994}), and/or by adaptive methods (e.g., \citeNP{Rabe-Hesketh2002}).
Monte Carlo methods of integration also exist, including Gibbs and Metropolis-Hasting variants \cite{Levine2001, Wei1990}, which can handle high-dimensions more efficiently. However, these methods require more samples as the number of EM iterations progress, so that Monte Carlo error does not interfere with EM convergence assessment \cite{Booth1999}.
The investigation of these issues could lead to improvements in both the hybrid and numerical integration approaches.

A final area of further research pertain to implementation of standard errors.
While we employed bootstrap standard errors in the current research, other methods exist that utilize numerical differentiation of either the EM operation \cite{Meng1991} or the observed Fisher score function \cite{Jamshidian1999}.
While these methods are able to directly approximate the observed-data Hessian matrix, their computational complexity is on the order of the number of parameters squared \cite{Jamshidian2000}. This makes computation time a practical issue for larger models.
Direct analytic derivation of the observed-data Hessian is also possible, but often deemed too tedious to undertake \cite{Hartley1958}.
However, this may be feasible with a proper MDP taxonomy, such as MDPs~1, 2, and~3 described in Section~\ref{sec:mdp}, or with symbolic computation methods.

\subsection{Implications and Conclusions}

An interesting application of the hybrid approach pertains to planned missingness designs.
In planned missingness designs, there is a large degree of control over missing data patterns.
The benefits of the hybrid approach can be fully realized if the missing data patterns are manipulated to completely avoid MDP~3.
This can be done by designing a planned missingness study to contain a block of items to be delivered to all participants (a non-missing block, e.g., item set X in a 3-form design; \citeNP{Graham2006}).
Then MDP~3 can be completly avoided by assigning one of the factors of each product-term to the non-missing data block.

Ultimately, the accuracy of scientifically inferential quantities, such as parameter estimates and coverage intervals are fundamental aspects for the applied use of any methodology.
Improvements in these areas directly impact the ability to detect effects accurately.
The wide use of product-term regression models and the prevalence of missing data issues makes our hybrid EM method highly applicable to problems in the psychological sciences.
A final step of this research would be to develop and disseminate a user-friendly software implementation, allowing accessible use of the hybrid EM method to the end-user researcher.

\newpage
\bibliographystyle{apacite}
\bibliography{references}

\newpage
\appendix
\renewcommand{\thesection}{A.\arabic{section}}
\renewcommand\thefigure{\thesection.\arabic{figure}}
\renewcommand\thetable{\thesection.\arabic{table}}

\begin{center}
\LARGE
\hfill\\
\hfill\\
\textbf{Appendix}
\hfill\\
\hfill\\
\end{center}

\section{Sufficient Statistics of the Joint Model} \label{sec_ss}

In this section we derive the vector of sufficient statistics $\bm{T}(\bm{U})$.
Noting that $\Prob(\bm{U}) = \Prob(Y | \bm{X}) \Prob(\bm{X})$, and $\Prob(Y | \bm{X})$ and $\Prob(\bm{X})$ are both exponential family distributions, we can derive the sufficient statistics in a piecewise manner.
It is well known that $\bm{T}(\bm{X}) = \begin{bmatrix} \bm{X}^T & \overrightarrow{X_j X_k}^T & \overrightarrow{X^2_j}^T \end{bmatrix}^T$ since $\Prob(\bm{X})$ is Gaussian.
Thus, what remains is determining the entries of $\bm{T}(\bm{U})$ that are contributed by $\Prob(Y | \bm{X})$.
Since $\Prob(Y | \bm{X})$ is also Gaussian, we can determine its sufficient statistics by examining the exponential as follows:
\begin{equation}
\begin{aligned}
\Prob(Y | \bm{X}) &= \dfrac{1}{\sqrt{2 \pi \sigma^2}} \exp \left( -\dfrac{1}{2 \sigma^2} \left( Y - \bm{d}(\bm{X})^T \bm{\beta}\right)^2 \right) \\
&= \dfrac{1}{\sqrt{2 \pi \sigma^2}} \exp \left( -\dfrac{1}{2 \sigma^2} \left( Y^2 - 2Y\bm{d}(\bm{X})^T \bm{\beta} + \bm{\beta}^T \bm{d}(\bm{X})\bm{d}(\bm{X})^T \bm{\beta} \right) \right).
\end{aligned}
\end{equation}
For which we have $Y^2$, $Y \bm{d} (\bm{X})^T$, and $\bm{d}(\bm{X}) \bm{d}(\bm{X})^T$.
$Y^2$ is self-explanatory and $Y \bm{d} (\bm{X})^T = \begin{bmatrix} Y & Y \bm{X}^T & Y \overrightarrow{X_j X_k}^T \end{bmatrix}^T$.
Then what remains for us to analyze is the $\bm{d}(\bm{X}) \bm{d}(\bm{X})^T$ term.
By simply checking the elements of this outer product we obtain:
\begin{equation}
\bm{d}(\bm{X}) \bm{d}(\bm{X})^T =
\begin{bmatrix}
1 & X^T & \overrightarrow{X_j X_k}^T \\
X & XX^T & X \overrightarrow{X_j X_k}^T \\
\overrightarrow{X_j X_k} & \overrightarrow{X_j X_k} X^T & \overrightarrow{X_j X_k} \overrightarrow{X_j X_k}^T
\end{bmatrix}.
\end{equation}
Then we can see that:
\begin{equation}
\begin{aligned}
\text{vec}(\overrightarrow{X_j X_k} X^T) &= \bm{P}_1 \begin{bmatrix} \overrightarrow{X_j X_k X_l}^T & \overrightarrow{X^2_j X_k}^T \end{bmatrix}^T \\
\text{vech}(\overrightarrow{X_j X_k} \overrightarrow{X_j X_k}^T) &= \bm{P}_2 \begin{bmatrix} \overrightarrow{X_j X_k X_l X_m}^T & \overrightarrow{X^2_j X_k X_l}^T & \overrightarrow{X^2_j X^2_k}^T \end{bmatrix}^T, \\
\end{aligned}
\end{equation}
where $\text{vec}(\cdot)$ and $\text{vech}(\cdot)$ are the vectorization and half-vectorization functions, respectively.
$\bm{P}_1$ and $\bm{P}_2$ are permutation matrices that yields the appropriate ordering.
Finally, we combine the sufficient statistics of $\Prob(Y | \bm{X})$ and $\Prob(\bm{X})$, we obtain:
\begin{equation}
\bm{T}(\bm{U}) = \left[ \begin{smallmatrix}
Y & Y \overrightarrow{X^{}_j}^T & Y^2 & Y \overrightarrow{X^{}_j X^{}_k}^T & \overrightarrow{X^{}_j}^T & \overrightarrow{X^{}_j X^{}_k}^T & \overrightarrow{X^2_j}^T & \overrightarrow{X^{}_j X^{}_k X^{}_l}^T & \overrightarrow{X^2_j X^{}_k}^T & \overrightarrow{X^{}_j X^{}_k X^{}_l X^{}_m}^T & \overrightarrow{X^2_j X^{}_k X^{}_l}^T & \overrightarrow{X^2_j X^2_k}^T
\end{smallmatrix} \right]^T.
\end{equation}

\section{Maximization of the $Q$-function} \label{sec:q_max}

The $Q$-function for a sample can be written as follows:
\begin{equation}
\label{eq:qfun_sample}
\begin{aligned}
Q_{\bm{\theta}^{(t)}}(\bm{\theta}) &= \sum_{i = 1}^n \Es{\log \Prob_{\bm{\theta}}(\bm{u}_i)} \\
&= \sum_{i = 1}^n \Es{\log \Prob_{\bm{\theta}}(y_i | \bm{x}_i) + \log \Prob_{\bm{\theta}}(\bm{x}_i)}.
\end{aligned}
\end{equation}
Maximizing with respect to $\bm{\beta}$, the relevant terms of the $Q$-function are:
\begin{equation}
\begin{aligned}
Q_{\bm{\theta}^{(t)}}(\bm{\beta}) &= -\dfrac{1}{2\sigma^2_{\epsilon}} \sum_{i = 1}^n \Es{(y_i - \bm{d}(\bm{x}_i)^T \bm{\beta})^2} + c \\
&= -\dfrac{1}{2\sigma^2_{\epsilon}} \sum_{i = 1}^n \Es{\bm{\beta}^T \bm{d}(\bm{x}_i) \bm{d}(\bm{x}_i)^T \bm{\beta} - 2y_i\bm{d}(\bm{x}_i)^T \bm{\beta}} + c'.
\end{aligned}
\end{equation}
The first partial derivative is then
\begin{equation}
\pder{Q_{\bm{\theta}^{(t)}}(\bm{\beta})}{\bm{\beta}} = -\dfrac{1}{2\sigma^2_{\epsilon}} \sum_{i = 1}^n \Es{ 2 \bm{d}(\bm{x}_i) \bm{d} (\bm{x}_i)^T \bm{\beta} - 2 y_i \bm{d}(\bm{x}_i)},
\end{equation}
which we set to zero and obtain the first-order condition of
\begin{equation}
\sum_{i = 1}^n \Es{ 2 \bm{d}(\bm{x}_i) \bm{d} (\bm{x}_i)^T}\bm{\beta} = \sum_{i = 1}^n \Es{2 y_i \bm{d}(\bm{x}_i)}.
\end{equation}
Thus our estimate of $\bm{\beta}$ is
\begin{equation}
\label{eq:hat_beta}
\hat{\bm{\beta}} = \left(\sum_{i = 1}^n \Es{\bm{d}(\bm{x}_i) \bm{d} (\bm{x}_i)^T} \right)^{-1} \sum_{i = 1}^n \Es{y_i \bm{d}(\bm{x}_i)}.
\end{equation}

Maximizing with respect to $\sigma^2_{\epsilon}$, we return to Equation~\ref{eq:qfun_sample} and once again collect the relevant terms as
\begin{equation}
Q_{\bm{\theta}^{(t)}}(\sigma^2_{\epsilon}) = -\dfrac{n}{2}\log(\sigma^2_{\epsilon}) - \dfrac{1}{2\sigma^2_{\epsilon}} \sum_{i = 1}^n \Es{(y_i - \bm{d}(\bm{x}_i)^T\bm{\beta})^2} + c.
\end{equation}
The first partial derivative is then
\begin{equation}
\pder{Q_{\bm{\theta}^{(t)}}(\sigma^2_{\epsilon})}{\sigma^2_{\epsilon}} = -\dfrac{n}{2\sigma^2_{\epsilon}} +  \dfrac{1}{2\sigma^4_{\epsilon}} \sum_{i = 1}^n \Es{(y_i - \bm{d}(\bm{x}_i)^T \bm{\beta})^2},
\end{equation}
which yields the first-order condition and estimate of $\sigma^2_{\epsilon}$ as
\begin{equation}
\begin{aligned}
\dfrac{n}{2\sigma^2_{\epsilon}} &= \dfrac{1}{2\sigma^4_{\epsilon}} \sum_{i = 1}^n \Es{(y_i - \bm{d}(\bm{x}_i)^T\bm{\beta})^2} \\
\Rightarrow \hat{\sigma}^2_{\epsilon} &= \dfrac{\sum_{i = 1}^n \Es{(y_i - \bm{d}(\bm{x}_i)^T\hat{\bm{\beta}})^2}}{n}.
\end{aligned}
\end{equation}
To express this quantity strictly in terms of the expected sufficient statistics, we expand the quadratic term to obtain
\begin{equation}
\label{eq:hat_sigma_e_2}
\begin{aligned}
\hat{\sigma}^2_{\epsilon} &= \dfrac{\sum_{i = 1}^n \Es{(y_i - \bm{d}(\bm{x}_i)\hat{\bm{\beta}})^2}}{n} \\
&= \dfrac{\sum_{i = 1}^n \Es{y^2_i}}{n} - 2 \dfrac{\sum_{i = 1}^n \Es{y_i \bm{d}(\bm{x}_i)^T}}{n}\hat{\bm{\beta}} + \hat{\bm{\beta}}^T \dfrac{\sum_{i = 1}^n \Es{\bm{d}(\bm{x}_i) \bm{d}(\bm{x}_i)^T}}{n}\hat{\bm{\beta}}.
\end{aligned}
\end{equation}
For simplicity, let us express $\hat{\bm{\beta}}$ with the following shorthand notation:
\begin{equation}
\begin{aligned}
\bm{A} &\coloneqq \sum_{i = 1}^n \Es{\bm{d}(\bm{x}_i) \bm{d} (\bm{x}_i)^T} \\
\bm{b} &\coloneqq \sum_{i = 1}^n \Es{y_i \bm{d}(\bm{x}_i)} \\
\Rightarrow \hat{\bm{\beta}} &= \bm{A}^{-1} \bm{b}.
\end{aligned}
\end{equation}
Re-expressing Equation~\ref{eq:hat_sigma_e_2} with these quantities, we have:
\begin{equation}
\begin{aligned}
\hat{\sigma}^2_{\epsilon} &= \dfrac{\sum_{i = 1}^n \Es{y^2_i}}{n} - 2 \dfrac{\bm{b}^T}{n}\hat{\bm{\beta}} + \hat{\bm{\beta}}^T \dfrac{\bm{A}}{n}\hat{\bm{\beta}} \\
&= \dfrac{\sum_{i = 1}^n \Es{y^2_i}}{n} - 2 \dfrac{\bm{b}^T\bm{A}^{-1} \bm{b}}{n} +  \dfrac{\bm{b}^T \bm{A}^{-1} \bm{A} \bm{A}^{-1} \bm{b}}{n} \\
&= \dfrac{\sum_{i = 1}^n \Es{y^2_i}}{n} - \dfrac{\bm{b}\bm{A}^{-1} \bm{b}}{n} \\
\end{aligned}
\end{equation}
which finally yields
\begin{equation}
\begin{split}
\hat{\sigma}^2_{\epsilon} = \dfrac{\sum_{i = 1}^n \Es{y^2_i}}{n} - \dfrac{\sum_{i = 1}^n \Es{y_i \bm{d}(\bm{x}_i)}\left(\sum_{i = 1}^n \Es{\bm{d}(\bm{x}_i) \bm{d} (\bm{x}_i)^T} \right)^{-1}}{n} \\ \times \sum_{i = 1}^n \Es{y_i \bm{d}(\bm{x}_i)}
\end{split}
\end{equation}

Returning to Equation~\ref{eq:qfun_sample} and collecting the relevant terms for $\bm{\mu}$ we have
\begin{equation}
\begin{aligned}
Q_{\bm{\theta}^{(t)}}(\bm{\mu}) &= -\dfrac{1}{2} \sum_{i = 1}^n \Es{(\bm{x}_i - \bm{\mu})^T \bm{\Sigma}^{-1} (\bm{x}_i - \bm{\mu})} + c \\
&= -\dfrac{n}{2} \bm{\mu}^T \bm{\Sigma}^{-1} \bm{\mu} + \sum_{i = 1}^n \Es{ \bm{x}_i }^T \bm{\Sigma}^{-1} \bm{\mu} + c'.
\end{aligned}
\end{equation}
This yields a first partial derivative of
\begin{equation}
\pder{Q_{\bm{\theta}^{(t)}}(\bm{\mu})}{\bm{\mu}} = -n\bm{\Sigma}^{-1}\bm{\mu} + \bm{\Sigma}^{-1}  \sum_{i = 1}^n \Es{\bm{x}_i},
\end{equation}
and thus our first-order condition and estimator for $\bm{\mu}$ is
\begin{equation}
\begin{aligned}
n\bm{\Sigma}^{-1}\bm{\mu} &= \bm{\Sigma}^{-1}  \sum_{i = 1}^n \Es{\bm{x}_i} \\
\Rightarrow \hat{\bm{\mu}} &= \dfrac{ \sum_{i = 1}^n \Es{\bm{x}_i}}{n}
\end{aligned}
\end{equation}

Finally, for $\bm{\Sigma}$, we once again collect the relevant terms from the $Q$-function as follows:
\begin{equation}
Q_{\bm{\theta}^{(t)}}(\bm{\Sigma}) = -\dfrac{n}{2}\log\left|\bm{\Sigma}\right| -\dfrac{1}{2} \sum_{i = 1}^n \Es{(\bm{x}_i - \bm{\mu})^T \bm{\Sigma}^{-1} (\bm{x}_i - \bm{\mu})} + c,
\end{equation}
which yields a first partial derivative of
\begin{equation}
\pder{Q_{\bm{\theta}^{(t)}}(\bm{\Sigma})}{\bm{\Sigma}} = -\dfrac{n}{2}\bm{\Sigma}^{-1} + \dfrac{1}{2} \bm{\Sigma}^{-1} \left( \sum_{i = 1}^n \Es{(\bm{x}_i - \bm{\mu}) (\bm{x}_i - \bm{\mu})^T} \right) \bm{\Sigma}^{-1},
\end{equation}
where we used the identities of $\pder{\log\left|\bm{A}\right|}{\bm{A}} = \bm{A}^{-1}$ and $\pder{\bm{b}^T\bm{A}^{-1}\bm{c}}{\bm{A}} = -\left(\bm{A}^{-1} \bm{c} \bm{b}^T \bm{A}^{-1}\right)^T$ for an invertible matrix $\bm{A}$, and vectors $\bm{b}$ and $\bm{c}$.
This gives us the first-order condition and estimator of $\bm{\Sigma}$ as
\begin{equation}
\begin{aligned}
\dfrac{n}{2}\bm{\Sigma}^{-1} &= \dfrac{1}{2} \bm{\Sigma}^{-1} \left( \sum_{i = 1}^n \Es{(\bm{x}_i - \bm{\mu}) (\bm{x}_i - \bm{\mu})^T} \right) \bm{\Sigma}^{-1} \\
\Rightarrow n \bm{\Sigma} \bm{\Sigma}^{-1} \bm{\Sigma} &= \bm{\Sigma} \bm{\Sigma}^{-1} \left( \sum_{i = 1}^n \Es{(\bm{x}_i - \bm{\mu}) (\bm{x}_i - \bm{\mu})^T} \right) \bm{\Sigma}^{-1} \bm{\Sigma} \\
\Rightarrow \hat{\bm{\Sigma}} &= \dfrac{\sum_{i = 1}^n \Es{(\bm{x}_i - \hat{\bm{\mu}}) (\bm{x}_i - \hat{\bm{\mu}})^T}}{n}.
\end{aligned}
\end{equation}
As with $\hat{\sigma}^2_{\epsilon}$ this can be written in terms of the sufficient statistics with
\begin{equation}
\begin{aligned}
\hat{\bm{\Sigma}} &= \dfrac{\sum_{i = 1}^n \Es{\bm{x}_i \bm{x}_i^T}}{n} - \dfrac{\hat{\bm{\mu}} \sum_{i = 1}^n \Es{\bm{x}_i}^T}{n} - \dfrac{\sum_{i = 1}^n \Es{\bm{x}_i} \hat{\bm{\mu}}^T}{n} + \hat{\bm{\mu}}\hat{\bm{\mu}}^T \\
&= \dfrac{\sum_{i = 1}^n \Es{\bm{x}_i \bm{x}_i^T}}{n} - \hat{\bm{\mu}}\hat{\bm{\mu}}^T.
\end{aligned}
\end{equation}

\section{Re-expression of $\mathbb{E}[Y X^a_j X^b_k | \boldmath{X}_O]$ Under MDP 1}
\label{a_mdp1y1}

Denote the index set of missing data except $j$ and $k$ as $A = M \setminus \{j, k\}$.
Then we can re-express $\mathbb{E}[Y X^a_j X^b_k | \bm{X}_O]$ as follows:
\begin{equation}
\begin{aligned}
\mathbb{E}[Y X^a_j X^b_k | \bm{X}_O] &= \int_y \int_{x_j} \int_{x_k}  y x^a_j x^b_k \mathbb{P}(y, x_j, x_k | \bm{x}_O) \, dx_j \, dx_k \, dy \\
&= \int_y \int_{x_j} \int_{x_k}  y x^a_j x^b_k \int_{\bm{x}_A} \mathbb{P}(y, x_j, x_k, \bm{x}_A | \bm{x}_O) \, d\bm{x}_A \, dx_j \, dx_k \, dy \\
&= \int_y \int_{x_j} \int_{x_k} \int_{\bm{x}_A}  y x^a_j x^b_k \mathbb{P}(y | x_j, x_k, \bm{x}_A, \bm{x}_O) \mathbb{P}(x_j, x_k, \bm{x}_A | \bm{x}_O) \, d\bm{x}_A \, dx_j \, dx_k \, dy \\
&= \int_{x_j} \int_{x_k} \int_{\bm{x}_A} \E[Y|\bm{X}] x^a_j x^b_k \mathbb{P}(x_j, x_k, \bm{x}_A | \bm{x}_O) \, d\bm{x}_A \, dx_j \, dx_k \\
&= \int_{\bm{x}_M} \bm{d}(\bm{X})^T \bm{\beta} x^a_j x^b_k \mathbb{P}(\bm{x}_M | \bm{x}_O) \, d \bm{x}_M \\
&=  \mathbb{E} [\bm{d}(\bm{X})^T \bm{\beta} X^a_j X^b_k | \bm{X}_O],
\end{aligned}
\end{equation}
where $\mathbb{P}(x_j, x_k, \bm{x}_A | \bm{x}_O) = \mathbb{P}(\bm{x}_M | \bm{x}_O)$ since $M = A \cup \{j, k\}$.

\section{Derivation of $\Prob(\boldmath{X}_M | Y, \boldmath{X}_O)$} \label{sec:proof_mdp2}

Consider a quadratic form of Equation~\ref{regressionModel}:
\begin{equation}
Y = \beta_0 + \bm{\beta}^T_f \bm{X} + \dfrac{1}{2}\bm{X}^T \bm{\beta}_s \bm{X} + \epsilon,
\end{equation}
where, without the loss of generality, we assume $\bm{X}$ is ordered as $X = \bbm \bm{X}^T_O & \bm{X}^T_M \ebm^T$ and $\bm{\beta}_f$ and $\bm{\beta}_s$ are defined as
\begin{equation}
\begin{aligned}
\bm{\beta}_f &\coloneqq \bbm \bm{\beta}^T_O & \bm{\beta}^T_M \ebm^T \\
\bm{\beta}_s &\coloneqq \bbm \bm{\beta}_{OO} & \bm{\beta}_{OM} \\ \bm{\beta}_{OM} & \bm{\beta}_{MM} \ebm,
\end{aligned}
\end{equation}
which contain the regression coefficients that correspond to the first-order and product terms of $\bm{d}(\bm{X})$, respectively.
Note that the diagonal entries of $\bm{\beta}_s$ are zeroes and under MDP 2 $\bm{\beta}_{MM}$ is a zero matrix.
We can then re-write the error term as a function of $\bm{X}_M$ as follows
\begin{equation}
\begin{aligned}
Y &= \beta_0 + \bm{\beta}^T_f \bm{X} + \dfrac{1}{2}\bm{X}^T \bm{\beta}_s \bm{X} + \epsilon \\
&= \beta_0 + \bm{\beta}^T_O \bm{X}_O + \bm{\beta}^T_M \bm{X}_M + \dfrac{1}{2} \bm{X}^T_O \bm{\beta}_{OO} \bm{X}_O + \bm{X}^T_O \bm{\beta}_{OM} \bm{X}_M + \epsilon \\
&= \beta_0 + \bm{\beta}^T_O \bm{X}_O + \dfrac{1}{2} \bm{X}^T_O \bm{\beta}_{OO} \bm{X}_O + (\bm{\beta}^T_M + \bm{X}^T_O \bm{\beta}_{OM}) \bm{X}_M + \epsilon \\
\Rightarrow \epsilon &= Y - \beta_0 - \bm{\beta}^T_O \bm{X}_O - \dfrac{1}{2} \bm{X}^T_O \bm{\beta}_{OO} \bm{X}_O - (\bm{\beta}^T_M + \bm{X}^T_O \bm{\beta}_{OM}) \bm{X}_M \\
\epsilon &= \alpha_0 - \bm{\alpha_1}^T \bm{X}_M,
\end{aligned}
\end{equation}
which follows by defining
\begin{equation}
\begin{aligned}
\alpha_0 &\coloneqq Y - \beta_0 - \bm{\beta}^T_O \bm{X}_O - \dfrac{1}{2} \bm{X}^T_O \bm{\beta}_{OO} \bm{X}_O \\
\bm{\alpha}_1 &\coloneqq \bm{\beta}_M + \bm{\beta}_{OM}^T \bm{X}_O.
\end{aligned}
\end{equation}

From here, we can proceed with a direct derivation of $\Prob(\bm{X}_M | Y, \bm{X}_O)$ as follows:
\begin{equation}
\begin{aligned}
\Prob(\bm{X}_M | Y, \bm{X}_O) &\propto \Prob(Y, \bm{X}_M, \bm{X}_O) \\
&\propto \Prob(Y | \bm{X}_M, \bm{X}_O) \Prob(\bm{X}_M | \bm{X}_O) \\
&\propto \exp \left[ -\dfrac{1}{2} \sigma^{-2}_{\epsilon} (\alpha_0 - \bm{\alpha}_1^T \bm{X}_M)^2 \right] \exp \left[ -\dfrac{1}{2} (\bm{X}_M - \bm{\mu}_c)^T \bm{\Sigma}^{-1}_c (\bm{X}_M - \bm{\mu}_c) \right] \\
&\propto \exp \left[ -\dfrac{1}{2} \sigma^{-2}_{\epsilon} (\bm{X}_M^T \bm{\alpha}_1 \bm{\alpha}_1^T \bm{X}_M - 2 \alpha_0 \bm{\alpha}_1^T \bm{X}_M ) \right] \exp \left[ -\dfrac{1}{2} ( \bm{X}_M^T \bm{\Sigma}^{-1}_c \bm{X}_M - 2 \bm{\mu}_c^T \bm{\Sigma}^{-1}_c \bm{X}_M ) \right] \\
&= \exp \left[ -\dfrac{1}{2} (\bm{X}_M^T (\bm{\Sigma}^{-1}_c + \sigma^{-2}_{\epsilon} \bm{\alpha}_1 \bm{\alpha}_1^T) \bm{X}_M + 2 (\bm{\mu}_c^T \bm{\Sigma}^{-1}_c + \sigma^{-2}_{\epsilon} \alpha_0 \bm{\alpha}_1^T)\bm{X}_M ) \right] \\
&= \exp \left[ -\dfrac{1}{2} (\bm{X}_M - \bm{b})^T \bm{A} (\bm{X}_M - \bm{b}) - \dfrac{1}{2} \bm{b}^T \bm{A} \bm{b} \right],
\end{aligned}
\end{equation}
where we completed the square by using the shorthand notation:
\begin{equation} \label{eq:mdp2_shorthand}
\begin{aligned}
\bm{A} &\coloneqq \bm{\Sigma}^{-1}_c + \sigma^{-2}_{\epsilon} \bm{\alpha}_1 \bm{\alpha}_1^T \\
\bm{b} &\coloneqq \bm{A}^{-1} (\bm{\Sigma}^{-1}_c \bm{\mu}_c + \sigma^{-2}_{\epsilon} \alpha_0 \bm{\alpha}_1).
\end{aligned}
\end{equation}
This ultimately implies
\begin{equation}
\begin{aligned}
\Prob(\bm{X}_M | Y, \bm{X}_O) &\propto \exp \left[ -\dfrac{1}{2} (\bm{X}_M - \bm{b})^T \bm{A} (\bm{X}_M - \bm{b}) \right] \\
\Rightarrow \Prob(\bm{X}_M | Y, \bm{X}_O) &\sim \mathcal{N}(\bm{b}, \bm{A}^{-1}),
\end{aligned}
\end{equation}
where $\bm{A}$ and $\bm{b}$ are functions of $\bm{\theta}$ as defined in Equation~\ref{eq:mdp2_shorthand}.
We note that this is the multivariate generalization to the one-dimensional result found by \citeA{Kim2015}.

\section{Generating $\sigma^2_{\epsilon}$} \label{sec:gen_sig_e}

Given a sampled $\bm{x}$, $\bm{\beta}$, and $R^2_a$, we algebraically solve for $\sigma^2_{\epsilon}$ in the following manner.
First, note that $\sigma^2_{\epsilon}$ can be written as a function of $R^2$ by the identity:
\begin{equation}
\begin{aligned}
R^2 &= \dfrac{\sigma^2_{\hat{Y}}}{\sigma^2_{\hat{Y}} + \sigma^2_{\epsilon}} \\
\sigma^2_{\hat{Y}} R^2 + \sigma^2_{\epsilon} R^2 &= \sigma^2_{\hat{Y}} \\
\sigma^2_{\epsilon} R^2 &= \sigma^2_{\hat{Y}} - \sigma^2_{\hat{Y}} R^2 \\
\sigma^2_{\epsilon} R^2 &= \sigma^2_{\hat{Y}}(1 - R^2) \\
\sigma^2_{\epsilon} &= \dfrac{\sigma^2_{\hat{Y}}(1 - R^2)}{R^2}.
\end{aligned}
\end{equation}
In turn, $\hat{\sigma}^2_{\hat{Y}}$ can be estimated from $\bm{x}$ and $\bm{\beta}$ by the following calculations:
\begin{equation}
\begin{aligned}
\hat{Y}_i &= \bm{\beta}^T \bm{d}(\bm{x}_i) \\
\bm{\mu}_{\hat{Y}} &= \Sum \dfrac{\hat{Y}_i}{n} \\
\hat{\sigma}^2_{\hat{Y}} &= \Sum \dfrac{(\hat{Y}_i - \bm{\mu}_{\hat{Y}})^2}{n - 1},
\end{aligned}
\end{equation}
Finally, the conversion between $R^2$ and $R^2_a$ was obtained by simply algebraically manipulating the definition $R^2_a$
\begin{equation}
\begin{aligned}
R^2_a &= 1 - (1 - R^2) \dfrac{n - 1}{n - d} \\
\Rightarrow R^2 &= 1 - (1 - R^2_a) \dfrac{n - d}{n - 1}.
\end{aligned}
\end{equation}
Thus, using $\bm{x}$, $\bm{\beta}$, and $R^2_a$, we can generate a $\sigma^2_{\epsilon}$ parameter and simulate the errors as $\epsilon \sim \mathcal{N}(0, \sigma^2_{\epsilon})$ to finally obtain outcome with $Y = \bm{d}(\bm{X})^T \bm{\beta} + \epsilon$.

\section{Generating \boldmath{$R$}} \label{sec:gen_mis}

The first step in generating $\bm{R}$ was assigning missing data patterns.
Recall that the simulation parameter $\varphi_{\text{MIS}}$ was the proportion of missing data overall and $\varphi_{\text{MDP3}}$ was the proportion \textit{among the missing data} assigned to MDP 3.
Let us define $\varphi_j$ as the proportion of the overall data assigned to MDP $j$.
We set these quantities as follows:
\begin{equation}
\begin{aligned}
\varphi_1 &= \varphi_2 \coloneqq \dfrac{1}{2} \varphi_{\text{MIS}} (1 - \varphi_{\text{MDP3}}) \\
\varphi_3 &\coloneqq \varphi_{\text{MIS}} \varphi_{\text{MDP3}}
\end{aligned}
\end{equation}
That is, $\varphi_3$ is simply determined by $\varphi_{\text{MIS}}$ and $\varphi_{\text{MDP3}}$, and $\varphi_1$ and $\varphi_2$ share the remaining proportion of missing data is equally.
Subsequently, let us define $n_j$ as the number of samples assigned to MDP $j$.
This was calculated as $n_j = \lceil n\varphi_j \rceil$, or $n\varphi_j$ rounded up.
Therefore, even if $n$ is small, each MDP is assigned at least one case so long as its corresponding $\varphi_j > 0$.

Once all $n_j$ were determined, we randomly (with uniform probability) selected a non-product variable in $\bm{X}$ to serve as a ``missingness anchor'' variable (denoted $X_a$, with mean $\mu_{a}$ and variance $\sigma^2_{a}$) to determine the missingness in all other variables.
This was done through an intermediate latent propensity variable, denoted $R^*$, defined as
\begin{equation} \label{eq:R_star}
R^* \coloneqq \zeta \left(\dfrac{X_a - \mu_{a}}{\sigma_a}\right) + \left( \sqrt{1 - \zeta^2} \right) \epsilon_0,
\end{equation}
where $\epsilon_0 \sim \mathcal{N}(0, 1)$.
This parameterization sets $\text{Cor}(R^*, X_a) = \zeta$, which allows for simple control over the MAR mechanism.
We set $\zeta = 0.7$ for all conditions.
Thus given $n$ samples of $X_a$, we used Equation~\ref{eq:R_star} to draw $n$ samples of $R^*$.
Then to determine missingness, the greatest $n_{MIS} \coloneqq n_1 + n_2 + n_3$ values of the sample of $R^*$ were assigned randomly to MDPs~1, 2, or~3 with uniform probability, such that the total of each MDP was equal to its corresponding $n_j$.

Once MDPs were assigned, missingness was drawn as follows.
First, to guarantee that the assigned missingness pattern was followed, one $X_j$ (or a product-term pair of $X_j$ and $X_k$) had their $R_{X_j}$ set to zero, selected with uniform probability among all MDP appropriate $\bm{X}$ variables.
Then we set
\begin{itemize}
  \item MDP 1: $R_Y = 0$ and $R_{X_j} \sim \text{Bernoulli}(0.5)$, for all $j \neq a$.
  \item MDP 2: $R_Y = 1$ and $R_{X_j} \sim \text{Bernoulli}(0.5)$, for all $j \neq a$ such that $X_j$ is not a factor of any product-terms in $\bm{d}(\bm{X})$. Then for pairs all $(X_j, X_k)$ that form a product-terms in $\bm{d}(\bm{X})$, we draw $R_{X_j}$ sequentially:
  \begin{itemize}
    \item If both $R_{X_j}$ and $R_{X_k}$ have not been drawn yet, then select one with uniform probability and draw as $\text{Bernoulli}(0.5)$.
    \item If only one $R_{X_j}$ has been set, then $R_{X_k} = 1 - R_{X_j}$.
  \end{itemize}
  \item MDP 3: $R_Y = 1$ and $R_j = R_k \sim \text{Bernoulli}(0.5)$, for all pairs $(j \neq a, k \neq a)$ such that $(X_j, X_k)$ form a product-term in $\bm{d}(\bm{X})$.
\end{itemize}

\end{document}